\newcommand{\tn}{\textnormal}
\definecolor{cinnabar}{rgb}{0.89, 0.26, 0.2}
\newcommand{\vqss}{\texttt{VQSS}}
\newcommand{\vhss}{\texttt{VHSS}}
\newcommand{\vcss}{\texttt{VCSS}}
\newcommand{\good}{2-\small{GOOD}\normalsize}
\newcommand{\hybrid}{\{$p,t,n$\}-\texttt{VHSS}}
\newcommand{\ramp}{\{$p,t,t',n$\}-ramp \texttt{VHSS}}
\theoremstyle{definition}
\newtheorem{definition}{Definition}
\newtheorem{assumption}{Assumption}
\newtheorem{proposition}{Proposition}
\theoremstyle{plain}
\newtheorem{lemma}{Lemma}
\newtheorem{theorem}{Theorem}
\newcolumntype{C}[1]{>{\centering\let\newline\\\arraybackslash\hspace{0pt}}m{#1}}
\begin{document}

\title{Verifiable Hybrid Secret Sharing With Few Qubits}
\date{\today}

\author{Victoria Lipinska}\email{v.lipinska@tudelft.nl}
\affiliation{QuTech, Delft University of Technology, Lorentzweg 1, 2628 CJ Delft, The Netherlands}
\affiliation{Kavli Institute of Nanoscience, Delft University of Technology, Lorentzweg 1, 2628 CJ Delft, The Netherlands}
\author{Gl\'{a}ucia Murta}\email{glauciamg.fis@gmail.com}
\affiliation{QuTech, Delft University of Technology, Lorentzweg 1, 2628 CJ Delft, The Netherlands}
\affiliation{Institut für Theoretische Physik III, Heinrich-Heine-Universität Düsseldorf, Universitätsstraße 1, D-40225 Düsseldorf, Germany}
\author{J\'{e}r\'{e}my Ribeiro}
\affiliation{QuTech, Delft University of Technology, Lorentzweg 1, 2628 CJ Delft, The Netherlands}
\affiliation{Kavli Institute of Nanoscience, Delft University of Technology, Lorentzweg 1, 2628 CJ Delft, The Netherlands}
\author{Stephanie Wehner}
\affiliation{QuTech, Delft University of Technology, Lorentzweg 1, 2628 CJ Delft, The Netherlands}
\affiliation{Kavli Institute of Nanoscience, Delft University of Technology, Lorentzweg 1, 2628 CJ Delft, The Netherlands}

\begin{abstract}
We consider the task of sharing a secret quantum state in a quantum network in a verifiable way.
We propose a protocol that achieves this task, while reducing the number of required qubits, as compared to the existing protocols. To achieve this, we combine classical encryption of the quantum secret with an existing verifiable quantum secret sharing scheme based on Calderbank-Shor-Steane quantum error correcting codes. 
In this way we obtain a verifiable hybrid secret sharing scheme for sharing qubits, which combines the benefits of quantum and classical schemes. Our scheme does not reveal any information to any group of less than half of the $n$ nodes participating in the protocol. Moreover, for sharing a one-qubit state each node needs a quantum memory to store $n$ single-qubit shares, and requires a workspace of at most $3n$ qubits in total to verify the quantum secret. Importantly, in our scheme an individual share is encoded in a single qubit, as opposed to previous schemes requiring $\Omega(\log n)$ qubits per share. 
Furthermore, we define a ramp verifiable hybrid scheme. We give explicit examples of various verifiable hybrid schemes based on existing quantum error correcting codes. 
\end{abstract}

\maketitle

\section{Introduction}

Secret sharing is a task, which allows us to securely split a secret message among $n$ network nodes, in such a way that at least a certain number of nodes is asked to collaborate in order to reconstruct the secret. 
However, one also requires that a subset with less than a certain number of nodes cannot gain any information about the secret. 
This way one can hide highly confidential and sensitive information from being exposed, for example missile launch codes or numbered bank accounts.
The splitting and sharing of the message is often performed by one designated node -- the dealer.
If the nodes do not trust the dealer, but they want a guarantee that a secret was indeed distributed, then they may wish to verify that at the end of the protocol there will be one well-defined secret that they can reconstruct. In this case, the secret sharing protocol involves an additional step of verification of the shares, and one talks about \emph{verifiable} secret sharing \cite{Chor1985,Feldman1987}.

Importantly, verifiable secret sharing is used as a subroutine for other cryptographic primitives, such as secure multipartite computation \cite{Chaum1988_unconditionally,Du2001}, byzantine agreement \cite{Feldman1997}, end-to-end auditable voting systems \cite{Ryan2015} and atomic broadcast \cite{Defago2004}.  
Likewise, a quantum analog, namely verifiable quantum secret sharing (\vqss), is a core subroutine for secure multiparty quantum computation \cite{Crepeau2002, BenOr2006} and fast quantum byzantine agreement \cite{BenOr2005}. 
Verifiable schemes, similarly to their non-verifiable counterparts, have the property that they hide information from a certain number of nodes. That is, any subset with $p$ or less nodes does not gain any information about the secret throughout the protocol. We call this property \emph{secrecy}. 

So far, many protocols have been proposed for sharing a classical secret using purely classical shares \cite{Shamir1979,Blakley1979,Krawczyk1994}, using classical and quantum shares \cite{Hillery1999,Karlsson1999,Chen2007,Xiao2004}, as well as for sharing a quantum secret with quantum shares \cite{Hillery1999,Cleve1999,Gottesman2000,Markham2008,Marin2013,Javelle2013}. This work concerns the last variant, namely schemes which share a quantum secret. Particularly, throughout this paper we will consider that the dealer shares a pure single-qubit state $\ket{\psi}$. 
In this scenario, numerous schemes for both non-verifiable quantum secret sharing \cite{Hillery1999, Cleve1999, Gottesman2000, Javelle2013, Marin2013, Bell2014} and verifiable quantum secret sharing \cite{Crepeau2002, Crepeau2005} are known.
Fundamentally, for any scheme sharing a quantum secret with only quantum resources, there exists a limit to how many nodes $p$ cannot gain any information about the secret. This limit is given by $p \leq \left\lfloor\tfrac{n-1}{2}\right\rfloor$ and can be intuitively understood as a consequence of the no-cloning theorem \cite{Wootters1982}. Indeed, if less than half of the nodes can reconstruct the secret, then there must exist at least two groups of nodes able to reconstruct it, which violates the no-cloning theorem. Moreover, if the majority of nodes recovers the secret exactly, then the remaining nodes get no information about the secret (for more details see \cite{Gottesman2000}). We will refer to schemes which saturate the above bound on $p$ as schemes with \emph{maximum secrecy}. In particular, for \vqss~with maximum secrecy, the only current construction \cite{Crepeau2002} requires that the dimension $q$ of local shares  scales with the number of nodes, $q > n$. 
Therefore, using the existing construction, we cannot find a non-trivial example of such a \vqss~scheme where the nodes hold single-qubit shares.  The reason for this scaling is that, in general, quantum secret sharing schemes are directly connected to resource-intensive quantum error correcting codes \cite{Cleve1999, Gottesman2000}. Consequently, this leads to secret sharing schemes which require $\Omega(\log n)$ of qubits per share. 

In the area of non-verifiable quantum secret sharing, some investigations have been performed to reduce the number of required qubits, particularly, by exploring ramp secret sharing schemes \cite{Ogawa2005,Marin2013} and classical encryption. 
In a ramp scheme one relaxes the constraint on the secrecy of the scheme, and therefore, allows some of the nodes to obtain partial information about the quantum state. This leads to schemes with less qubits per share. 
Additionally, the secrecy of a ramp scheme can be \emph{lifted}, i.e. the value of $p$ can be increased
by encrypting the quantum state and then sharing the encryption key via classical secret sharing, see Figure \ref{fig:secrecy}. Such a solution was dubbed hybrid secret sharing \cite{Nascimento2001,Gheorghiu2012,Fortescue2012, Singh2005}.

\begin{figure}[t]
\ffigbox{
 \includegraphics[scale=0.3]{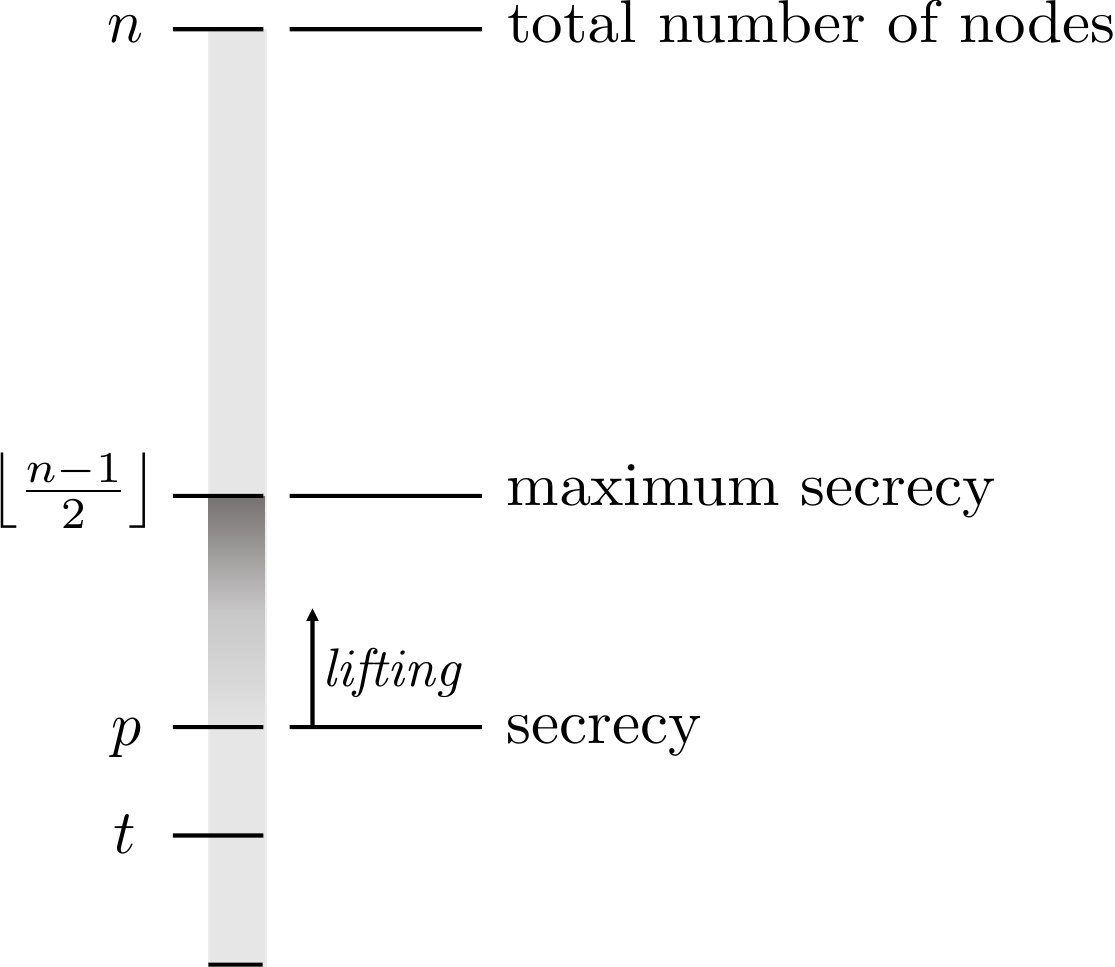}
}{
\caption{Lifting the secrecy of an $n$-node secret sharing scheme of a quantum state, i.e. increasing the value $p$ of nodes which gain no information about the secret state throughout the execution of the scheme. Here $t$ denotes the number of nodes that can perform arbitrary operations on their shares throughout the protocol, and hence corrupt the secret (active cheaters). }\label{fig:secrecy}
}
\end{figure}

In early stages of quantum network development, it would be desirable to implement \vqss~on a network with ability to control only a small number of qubits. 
Since quantum resources are expensive, a lot of effort is being put in reducing them in many areas of quantum information field, for example quantum computing or quantum simulation \cite{Bravyi2016,Steudtner2018,Moll2016,Bravyi2017,Peng2019}. However, reducing the resource requirements in the domain of distributed systems, and in particular verifiable secret sharing, has not been considered so far. Here we address the question of whether a verifiable secret sharing scheme with the maximum secrecy property (i.e. $p = \left\lfloor\tfrac{n-1}{2}\right\rfloor$) can be realized on a quantum network with less qubits. We answer this question positively by presenting a scheme which reduces quantum resources necessary for sharing a quantum secret in a verifiable way.

\section{Results}\label{sec:results}

Our contribution is three-fold. First, our scheme realizes the task of verifiable secret sharing of a quantum state using a single qubit per share. Second, we show that the protocol can be realized in a setting where each node needs to store $n$ qubits in a quantum memory and has a workspace of $3n$ qubits in total to verify the secret. For comparison, previous protocols \cite{Crepeau2002,Smith2001} require shares with $\Omega(\log n)$ qubits and each node having simultaneous control over $\Omega(r^2n\log(n))$ qubits for verification, where $r$ is the security parameter. Finally, our scheme preserves the maximum secrecy condition. This may enable qubit reductions for future implementations of cryptographic schemes, like multiparty computation or byzantine agreement, which use \vqss~as a subroutine.

We extend the idea of a hybrid scheme to verifiable quantum secret sharing. Specifically, we present a protocol that achieves the task of sharing a single-qubit quantum state $\ket{\psi}$ in a verifiable way, where the dimension $q$ of individual shares does not grow with the number of nodes $n$. In the spirit of =\cite{Nascimento2001,Gheorghiu2012,Fortescue2012,Singh2005}, we make use of classical verifiable secret sharing \cite{Rabin1989,Stinson2000} in order to obtain a verifiable hybrid scheme where each node holds at most $3n$ single-qubit shares at a time during the verification of the secret, see Outline below. Our scheme has a variety of consequences. Thanks to the classical encryption of the quantum state via quantum one-time pad \cite{Mosca2000}, our protocol can attain maximum secrecy, i.e. $p = \left\lfloor\tfrac{n-1}{2}\right\rfloor$. We show that by using a suitable classical scheme, one can beat the limit of maximum secrecy at the cost of tolerating less active cheaters (i.e. nodes that can perform arbitrary operations on their shares, see Adversary). 
Furthermore, motivated by non-verifiable schemes, we define the notion of strong threshold schemes in the context of verifiability, where any $p+1$ nodes can reconstruct the secret, any $p$ nodes do not gain any information about it, and $t$ nodes can actively cheat in the protocol. We then show that according to our definition, it is impossible to construct a verifiable strong threshold scheme. Finally, we show how to achieve a ramp hybrid scheme allowing for sharing secrets in a verifiable way. The security proof of our protocol expands on the approach suggested in \cite{Crepeau2002,Smith2001}, see Appendix \ref{app:security_vqss} for details.

\begin{figure}[b]
\ffigbox{
 \includegraphics[scale=0.3]{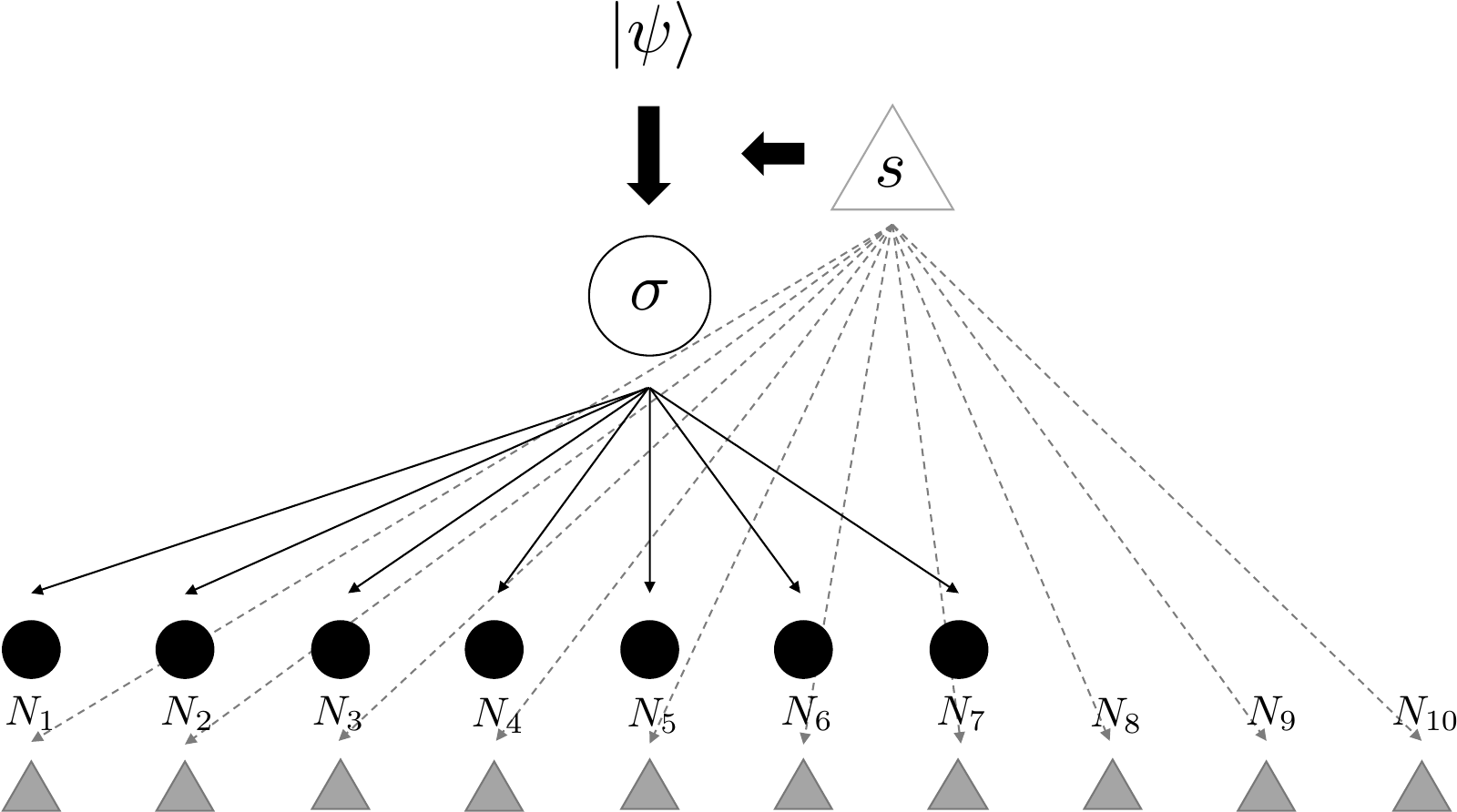}
}{
\caption{A sketch of a verifiable hybrid secret sharing (\vhss) protocol for $n = 10 $ nodes denoted $N_1,\dots,N_{10}$, with $n_q = 7$ quantum ($\CIRCLE$) and $n_c = 10$ classical ($  
	\begingroup
    \color{gray}
    \blacktriangle
  	\endgroup$) shares. The quantum secret state $\ket{\psi}$ of the dealer is encrypted using a classical key $s$. The resulting encrypted state $\sigma$ and the key $s$ are then distributed by the dealer as quantum and classical shares respectively.}\label{fig:vhss}
}
\end{figure}

\vspace{1em}

\textbf{Number of nodes.}
One key ingredient in our resource reduction is to combine quantum and classical resources in a hybrid scheme. In our model, some nodes hold quantum shares and some nodes hold classical shares. Note that nodes can have both quantum and classical shares, see Figure \ref{fig:vhss}. 
We denote the number of nodes with classical shares and the nodes with quantum shares by $n_c$ and $n_q$ respectively, and by $n$ the total number of nodes. 

\textbf{Adversary.}
We allow for the existence of $t$ malicious nodes (cheaters) in the protocol. We say that those cheaters are \emph{active}, meaning that they can perform arbitrary joint operations on their state during the execution of the protocol, in order to learn $\ket{\psi}$. We say that a protocol \emph{tolerates} $t$ active cheaters if at the end of the protocol the reconstruction of the quantum state is possible despite the presence of those cheaters. 
The nodes who follow the protocol exactly are called honest. 
We follow the common assumption that the set of malicious quantum and classical nodes is determined at the beginning of the hybrid protocol and stays fixed throughout (\emph{non-adaptive} adversary). We also assume that all nodes have access to an authenticated broadcast channel \cite{Canetti1999} and that each pair of nodes is connected by authenticated, private classical \cite{Canetti2004} and quantum \cite{Barnum2002} channels.

\begin{definition}[\hybrid]\label{def:hybrid_scheme}
A \hybrid~verifiable hybrid secret sharing scheme is an $n$-node protocol with three phases: sharing, verification and reconstruction, and two designated players, dealer $D$ and reconstructor $R$. In the sharing phase $D$ shares a pure single-qubit quantum state $\ket{\psi}$ using quantum and classical shares. In the verification phase all of the nodes verify that the set of shares defines a unique quantum state. In the reconstruction phase $R$ receives all shares from all nodes, and reconstructs the unique state defined by these shares. 
We require that the scheme satisfies the following requirements despite of the presence of $t$ non-adaptive active cheaters, except with probability exponentially small in the security parameter $r$:
\begin{itemize}[topsep=0pt,itemsep=-1ex,partopsep=1ex,parsep=1ex]
\item Soundness: if $R$ is honest and $D$ passes the verification phase, then there is a unique state $\ket{\psi}$ that can be recovered by $R$;
\item Completeness: if $D$ is honest then she always passes the verification phase. Moreover, if $R$ is also honest then the reconstructed state is exactly $D$'s state $\ket{\psi}$;
\item Secrecy: if $D$ is honest then any group of $p\geq t$ nodes cannot gain any information about the secret before reconstruction.
\end{itemize}
\end{definition}

The parameters of the scheme are determined by an underlying quantum error correcting code which we use as a building block. In particular, a relevant variable is the distance $d$ of the code. We remark that our results generalize to multi-qubit scenarios.

\begin{table*}
\centering
\begin{minipage}{\linewidth}
\renewcommand\arraystretch{1.3}
\caption{Examples of verifiable hybrid secret sharing schemes using one qubit shares coming from this work. The secret is shared among $n$ nodes. A $\{\left\lfloor\tfrac{n-1}{2}\right\rfloor,t,n\}$-\vhss~scheme uses shares from all of the nodes to reconstruct the secret, whereas $\{\left\lfloor\tfrac{n-1}{2}\right\rfloor,t,t',n\}$-ramp \vhss~scheme can reconstruct the secret without any $t'$ nodes. Both schemes tolerate $t$ active cheaters and are based on error correcting codes of \cite{Landahl2011,Bravyi1998}.}\label{tab:vhss_examples}
\begin{tabular}{|C{4cm}|C{3cm}|C{3cm}|C{3cm}|C{3cm}|}
\hline 
\multirow{2}{*}{Number of nodes $n$}   & \multicolumn{2}{c|}{$\{\left\lfloor\tfrac{n-1}{2}\right\rfloor,t,n\}$-\vhss} & \multicolumn{2}{c|}{$\{\left\lfloor\tfrac{n-1}{2}\right\rfloor,t,t',n\}$-ramp \vhss} \\ \cline{2-5}
& $t = 2$ & $t = 4$ & $t = 1$ & $t = 2$ \\
\hline \hline
 $2(t+1)^2$ & $\{8,2,18\}$ & $\{24,4,50\}$ & $\{8,1,1,18\}$ & $\{24,2,2,50\}$ \\ 
\hline 
$3t^2+3t+1 $  & $\{9,2,19\}$ & $\{30,4,61\}$ & $\{9,1,1,19\}$ & $\{30,2,2,61\}$ \\ 
\hline 
 $ 6t^2+1$  & $\{12,2,25\}$ & $\{48,4,97\}$ &$\{12,1,1,25\}$ & $\{48,2,2,97\}$ \\ 
\hline 
$ 8t^2+4t+1$  & $\{20,2,41\}$ & $\{72,4,145\}$ & $\{20,1,1,41\}$ & $\{72,2,2,145\}$ \\ 
\hline 
\end{tabular}
\end{minipage}
\end{table*}

\vspace{1em}

\subsection{\tn{\hybrid}~verifiable hybrid secret sharing protocol.} 

\begin{framed}
\noindent Outline of the verifiable hybrid secret sharing (\vhss)~protocol (see Protocol 1).

\noindent \rule{\textwidth}{0.5pt}

 1. \emph{Sharing}\\
The dealer $D$ encrypts the secret quantum state $\ket{\psi}$ using a classical key $s=ab$ and quantum one-time pad \cite{Mosca2000},
\begin{align*}
\sigma_{QS} =  \sum_{ab = \{0,1\}^2} \frac{1}{4} X^aZ^b \ketbra{\psi}_Q Z^bX^a \otimes \ketbra{ab}_{S}
\end{align*}
where $Q$ is the quantum register of the dealer and $S$ is the classical register of the encryption key.
She shares the encrypted state among the nodes using the quantum protocol and the key $s$ using the classical protocol, see Protocol 1 ``Sharing''.\\

2. \emph{Verification}\\
Nodes verify whether $D$ is honest, i.e. that the shares held by the nodes are consistent and at the end of the protocol a state will be reconstructed. For this, each node encodes the qubit received from the dealer into further $n$ qubits and sends $n-1$ of them to other nodes. Then, each node uses at most additional $2n$ ancilla qubits for one iteration of the verification procedure. There are $\mathcal{O}(r^2)$ iterations of verification, where $r$ is the security parameter. If the dealer passes the verification phase the protocol continues. Otherwise it aborts.\\

3. \emph{Reconstruction}\\
One designated node $R$ collects all shares of $\sigma$ and reconstructs it. She also reconstructs the classical key $s$ and decrypts $\ket{\psi}$. 
\\

\noindent\rule{\linewidth}{0.4pt}\\
\textbf{Remark.} Throughout the protocol each of the nodes needs to simultaneously store $n$ single-qubit shares corresponding to the encoded secret state. In the verification phase each node creates at most $2n$ ancilla qubits, performs a joint operation between these ancillas and the shares of the secret, and then measures only the ancilla qubits. This means that the nodes require a workspace of at most $3n$ qubits in total for verification.

\end{framed}

We revisit the \vqss~scheme introduced in \cite{Crepeau2002} and explore its extension to a verifiable scheme which uses single-qubit shares. The construction we use is based on Calderbank-Shor-Steane (CSS) error correcting codes \cite{Calderbank1996,Steane1996}. Then, we use the existing verifiable classical secret sharing schemes \cite{Rabin1989,Stinson2000} to combine classical encryption of the quantum secret with the \vqss~scheme to achieve an $n$-node verifiable hybrid secret sharing scheme (\vhss), see Outline. In \hybrid~the number $p$ of nodes who cannot gain any information about the quantum state is determined by the classical scheme. Moreover, $t\leq \left\lfloor\tfrac{d-1}{2}\right\rfloor$ cheaters are active and constrained by the distance $d$ of the underlying CSS code. In our scheme the secret state of the dealer $\ket{\psi}$ is encrypted using quantum one-time pad with a classical key $s$, and then both objects are shared and verified in parallel. It is, therefore, impossible to reconstruct the quantum secret without reconstructing the classical key. In the case when $n = n_q = n_c$ we achieve the following functionalities: 

\begin{itemize}
\item We construct a scheme which attains maximum secrecy using single qubit shares. Specifically, thanks to using classical encryption, we show that in our \hybrid~scheme any $p \leq \left\lfloor\tfrac{n-1}{2}\right\rfloor$ nodes coming together before reconstructing the secret, do not gain any information about it. Our \hybrid~scheme tolerates up to $t<\tfrac{n}{4}$ active cheaters. Reconstruction of the secret occurs with all of the shares.

\item We show how to achieve a \hybrid~scheme for $p >\left\lfloor\tfrac{n-1}{2}\right\rfloor$ by choosing an appropriate classical verifiable scheme \cite{Stinson2000}. In this case, however, there exists a trade-off between the number of active cheaters and secrecy, such that $n \geq p + 3t +1$. Therefore, in order to achieve higher secrecy we tolerate less active cheaters $t$. As before, reconstruction of the secret occurs with all of the shares.

\item We define a strong threshold scheme (see Definition~\ref{def:threshold}) where shares from any group of $n-t'$ nodes are sufficient for the reconstruction, no group of $p = n-t'-1$ nodes gains any information about the state. Importantly, we show that according to our definition, it is impossible to achieve a verifiable strong threshold scheme, namely, a scheme which satisfies the two above constraints and tolerates $t$ active cheaters at the same time.

\item We relax the secrecy constraint of the strong threshold scheme and construct a ramp \vhss~scheme (see Definition \ref{def:ramp}). In our ramp verifiable scheme any $n-t'$ nodes can reconstruct the secret, but any group of at most $p \leq \left\lfloor\tfrac{n-1}{2}\right\rfloor$ does not have any information about it. 
The scheme tolerates $t$ active cheaters, where $t + t' \leq \left\lfloor\tfrac{d-1}{2}\right\rfloor$ are constrained by the distance of the underlying quantum error correcting code. We denote it with \ramp.
\end{itemize} 
In the case when $n = n_c > n_q$, our \vhss~scheme allows us to construct a scheme which extends verifiable quantum secret sharing onto nodes with purely classical capabilities, see Figure \ref{fig:vhss}. That is, we use \vqss~to share a quantum secret with $n_q$ nodes, but we extend the sharing of the classical key $s$ onto $n_c > n_q$ nodes. Therefore, some of the nodes hold only classical shares but still participate in hiding of the quantum secret. Due to the properties of our protocol, this scheme can also lift the secrecy, such that no set with $p \leq \left\lfloor\tfrac{n-1}{2}\right\rfloor$ nodes can learn the quantum state before the reconstruction.

\subsection{Implications for resource reduction.}
Our scheme allows us to exploit CSS quantum error correcting codes which encode a single-qubit quantum state into single-qubit shares. Such codes are well-studied in the literature and therefore, numerous schemes with defined encoding and decoding exist \cite{Landahl2011,Bravyi1998}. In the next section we present examples of \vhss~schemes based on such codes. 
We remark that one could use approximate error correction codes and in this way increase the number of active cheaters to $2t$ \cite{Barnum2002, Crepeau2005}. However this solution requires significantly more resources, see Section \ref{sec:outlook}.

\section{Resource reduction}\label{sec:resource_reduction}

Our protocol reduces the number of qubits that need to be controlled simultaneously by each node. To do so, we adapt the protocol of \cite{Crepeau2002}, where the verification procedure requires ancillas used in parallel, to a setting where they can be used sequentially, i.e. one by one. This way, 
each node needs control over $3n$ operational qubits at a time. 
For comparison, the parallel execution of \cite{Crepeau2002} requires simultaneous control over $\Omega(r^2n\log(n))$ qubits per node, where $r$ is the security parameter.

Here we list a few examples of CSS codes leading to \vhss~schemes with single-qubit shares (also see Table \ref{tab:vhss_examples}). We express our examples in terms of maximum tolerable number of active cheaters $t$. Note that for a particular code there exists a trade-off between the number of active cheaters and the total number of nodes. 
\vspace{0.5em}

\noindent For $t=1$:
\begin{itemize}
\item  $\{3,1,7\}$-\vhss. In this scheme $n = n_c = n_q = 7$ nodes hold both quantum and classical shares. The scheme achieves maximum secrecy, i.e. no group of $p = \left\lfloor\tfrac{7-1}{2}\right\rfloor = 3$ shares acquires any information about the secret. All of the quantum shares are single-qubit shares, and each node requires control over 21 qubits at a time for the verification procedure. This example is based on the Steane's $[[7,1,3]]_2$ code, encoding 1 qubit into 7 qubits, with distance $d=3$ \cite{Steane1996}. In this scheme all shares are necessary to reconstruct the secret.

Note that the Steane's code without the classical encryption would generate a \vqss~scheme, where no $2$ nodes could gain any information about the secret. However, due to the properties of the code, a \emph{specific} group of 3 nodes could still reconstruct the secret. To compare, the existing construction to achieve a purely quantum scheme with maximum secrecy, requires individual shares of dimension $q>7$. 

\item $\{\left\lfloor\tfrac{n-1}{2}\right\rfloor,1,n\}$-\vhss. In this scheme $n_q=7$ out of $n$ nodes hold quantum single-qubit shares and $n = n_c >7 $ hold classical shares. The scheme achieves maximum secrecy. For the construction we use the Steane's $[[7,1,3]]_2$ code and a classical scheme of \cite{Rabin1989}. Therefore, in our scheme only 7 nodes need to have quantum resources, but all of the $n$ nodes can participate in verifiable secret sharing of a quantum state.
 \end{itemize}
 
\noindent For $t \geq 1$:
 \begin{itemize}
\item $\{\left\lfloor\tfrac{n-1}{2}\right\rfloor,t,n\}$-\vhss. We construct \vhss~schemes which tolerate more than one active cheater and achieve maximum secrecy. All of the nodes hold both quantum and classical shares ($n_q=n_c=n$), and the quantum shares contain a single qubit. For the construction we use higher-distance quantum error correcting codes, for example {toric} codes and color codes \cite{Landahl2011,Bravyi1998}, and \vcss~scheme of \cite{Rabin1989}. We present specific examples in Table \ref{tab:vhss_examples}. Note that each of those schemes can be expanded onto even larger total number of nodes, by using a verifiable classical secret sharing scheme with $n_c > n_q$.

\item \ramp. Based on the same higher-distance quantum error correcting codes \cite{Landahl2011,Bravyi1998}, we construct examples of ramp schemes, see Tab. \ref{tab:vhss_examples}. All of the nodes hold quantum and classical shares, however, only $n-t'$ are used to reconstruct the secret. 

\end{itemize}

\section{Methods} \label{sec:methods}

\subsection{Protocol} \label{subsec:protocol}

Our protocol is a hybrid between a classical scheme (\vcss) and a quantum scheme (\vqss) to share the classical key $s$ and the encrypted quantum state $\sigma_{QS}$, respectively. In the following we summarize the principles of these two protocols.

\subsubsection{Verifiable Classical Secret Sharing}

A verifiable classical secret sharing scheme is a scheme which shares a classical secret of the dealer among $n_c$ nodes in a verifiable way, using classical shares. The scheme is such that $p_c$ nodes cannot gain any information about the classical secret after coming together (secrecy) and there are at most $t_c$ active non-adaptive cheating nodes that the scheme tolerates. We represent the classical verifiable secret sharing protocol with a triple $(p_c,t_c,n_c)$-\vcss.
Here we treat the \vcss~scheme as a secure black box which leaks no information about the classical key $s$, even if the adversary has access to quantum side information during the execution of \vcss. 
\vcss~schemes that are information theoretically secure in the context of classical adversary have been presented in for example \cite{Rabin1989,Chaum1988_unconditionally,Stinson2000}. Here
we add it as an assumption that any \vcss~protocol used to build Protocol 1 is secure against a quantum adversary in the information-theoretic sense. 

\begin{assumption}\label{assump:vcss}
The \vcss~scheme used to build Protocol 1, does not leak any information about the secret key $s$ to any set of $p_c$ nodes, except with probability exponentially small in the security parameter $r$, even in the presence of quantum side information. That is, the scheme is information theoretically secure in the presence of a quantum adversary.
\end{assumption}

Formally, \vcss~is a classical protocol in which the dealer inputs a classical message $s$, which is shared among the nodes. 
Let $P$ be a set of size at most $p_c$, and let $\mathcal{Q}_P$ denote any quantum side information held by the nodes in set $P$ at the end of the verification phase of the \vhss. In principle, $\mathcal{Q}_P$ could be arbitrarily correlated with the classical secret key $s$. However, Assumption \ref{assump:vcss} implies that the state held by nodes in $P$ carries no information about the key $s$, other than what was known prior to the beginning of the protocol.

To the best of our knowledge, security of protocols of \cite{Chaum1988_unconditionally,Stinson2000} against an adversary with quantum side information was never formalized. We note that in Theorem~13 of \cite{Unruh2010} it was proven that any classical protocol which is statistically secure in a universal composable (UC) sense,
is also statistically UC-secure against a quantum adversary. Furthermore, \cite{Prabhakaran2013,Canetti2002} discuss the possibility of strengthening the security of \cite{Rabin1989} to UC-security. As a consequence \cite{Rabin1989} could be conjectured statistically UC-secure against a quantum adversary.

In what follows, unless specified otherwise, we will consider a classical \vcss~protocol of \cite{Rabin1989}. This scheme is secure with exponentially small probability of error $2^{-\Omega(r')}$, where $r'$ is the security parameter. Here, for convenience, we choose $r'$ such that $r'=r$, where $r$ is the security parameter of \vhss. The protocol can tolerate up to $t_c < \tfrac{n_c}{2}$ malicious nodes. In particular, it also implies that $p_c = t_c < \tfrac{n_c}{2}$.

\subsubsection{Verifiable Quantum Secret Sharing}
To construct our hybrid scheme we employ a \vqss~scheme which uses single-qubit shares. The \vqss~scheme summarized here is based on the results of \cite{Crepeau2002}.

A verifiable quantum secret sharing scheme is a scheme which shares a quantum state of the dealer among $n_q$ nodes in a verifiable way, using quantum shares. The scheme is such that $p_q$ nodes cannot gain any information about the secret (secrecy) and there are at most $t_q$ non-adaptive active cheating nodes that the scheme tolerates. We denote such a scheme with a triple $(p_q, t_q, n_q)$-\vqss.
To share a pure qubit state among $n_q$ nodes in a \vqss, the nodes agree on (an efficiently decodable) $[[n_q,1,d]]_2$ Calderbank-Shor-Steane (CSS) error correcting code $\mathcal{C}$. Such a code encodes 1 qubit into $n_q$ qubits and has distance $d$. This means that the chosen CSS code is able to correct $t_q \leq \left\lfloor\tfrac{d-1}{2}\right\rfloor$ arbitrary errors and $p_q \leq d-1$ erasure errors. 

The CSS code $\mathcal{C}$ used to perform the protocol, is defined through two binary classical linear codes, $V$ and $W$, satisfying $V^* \subseteq W$, where $V^*$ is the dual code. Then, $\mathcal{C} = V \cap \mathcal{F}W$ is a set of states of $n_q$ qubits which yield a codeword in $V$ when measured in the standard basis, and a codeword in $W$ when measured in the Fourier basis \cite{NielsenChuang2011}.
An important property of a CSS code, which is useful for the \vqss~protocol, is the fact that certain logical operations $\bar{\Lambda}$ can be implemented by applying local operations $\Lambda$ on the individual qubits held by the nodes and encoded with $\mathcal{C}$, i.e. $\bar{\Lambda}=\Lambda^{\otimes n_q}$. This property, called transversality, means that specific logical operations can be applied qubit-wise. In particular, the protocol uses the fact that \emph{(i)} applying a CNOT gate is tranversal; \emph{(ii)} applying the Fourier transform qubit-wise maps codewords of the code $\mathcal{C}$ onto codewords of the dual code $\tilde{\mathcal{C}}$; \emph{(iii)} measurements can be performed qubit-wise, but measurement outcome of every qubit must be communicated classically to obtain the result of the logical measurement.

\begin{figure*}
\centering
\includegraphics[width=0.9\textwidth]{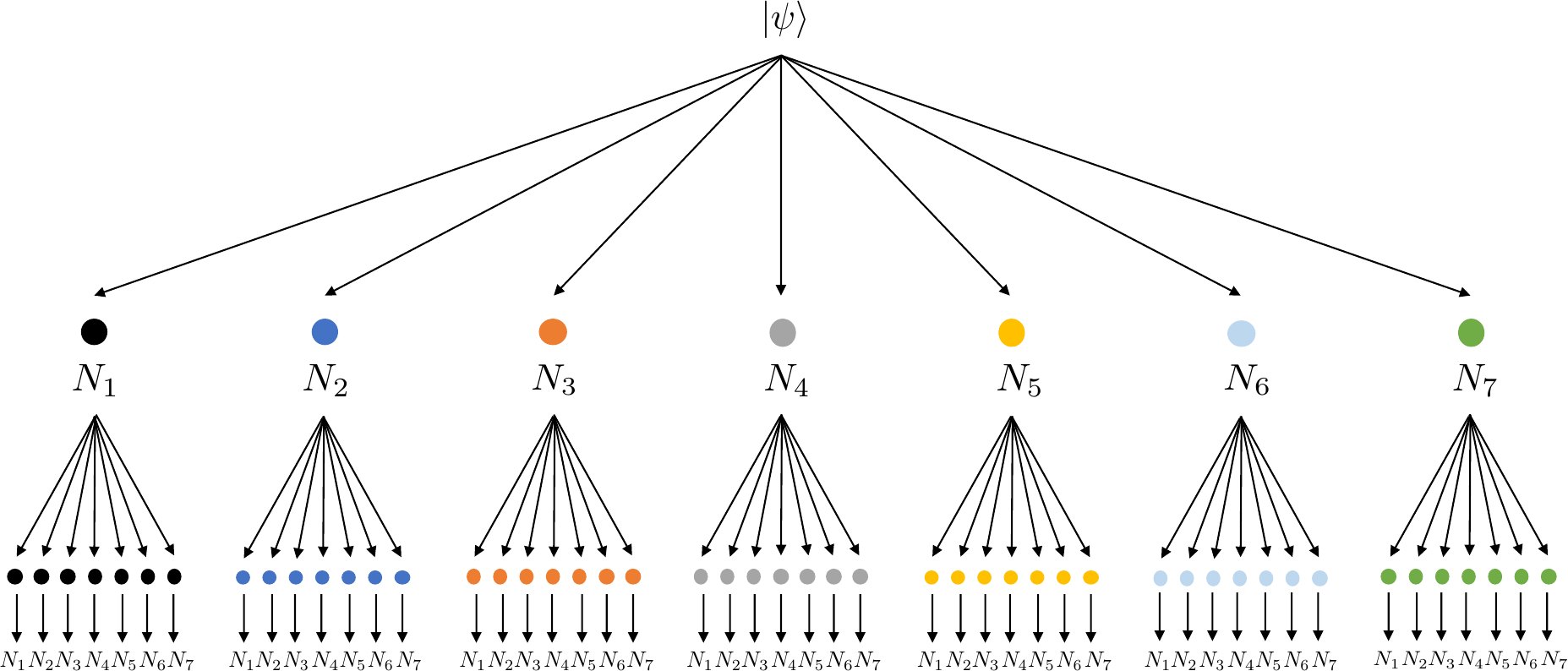}
\caption{The encoding tree for (2,1,7)-\vqss~protocol with 7 nodes $N_1,\dots,N_7$, based on the Steane's $[[7,1,3]]_2$ code. The figure represents the encoding done in the sharing phase by each of the nodes.}\label{fig:713}
\end{figure*}

In the \vqss~protocol the dealer $D$ encodes the quantum secret state $\ket{\psi}$ using the code $\mathcal{C}$ and distributes it to $n_q$ nodes. Next, each node $i$ encodes her qubit into $n_q$ further qubits and distributes those to every other node, see Figure \ref{fig:713}. This way the nodes create two levels of encoding which can be represented as a tree. 
The second level of encoding gives each node some control over all the other shares, which allows honest nodes to check consistency of all the shares. 

The protocol aims to verify whether the shares (the tree) create a codeword for which decoding is well-defined with respect to the code $\mathcal{C}$, without revealing any information about the secret state of the dealer. This property is formally defined in \cite{Crepeau2002,Smith2001} and is dubbed \good. Intuitively, a \good$_V$ tree means that for all branches of the tree which are held by honest nodes, upon measuring their shares of the tree, there exists a unique codeword in the code $V$ that can be recovered. Since $\mathcal{C} = V \cap \mathcal{F}W$, to verify that the encoded tree is \good$_\mathcal{C}$, the verification procedure first verifies that the tree is \good$_V$ when measured in the standard basis, and then that it is \good$_W$ when measured in the Fourier basis. 

We adapt the verification procedure from the work of \cite{Crepeau2002, Smith2001} to run in a sequential way. In our procedure, to verify that the encoded secret is \good$_V$ in the standard basis, the dealer and the nodes create auxiliary trees initiated in a logical $\ket{\bar{+}}$ state of the code $\mathcal{C}$. Importantly, these systems are distributed one at a time. Therefore, each node needs to control $2n$ qubits at a time: $n$ single-qubit shares for the encoded secret state, and $n$ single-qubit shares for the auxiliary $\ket{\bar{+}}$ state. We perform $r$ such checks, where $r$ is the security parameter. 

After this step, our protocol verifies that the encoded secret is \good$_W$ in the Fourier basis. To do so, the dealer and the nodes create new auxiliary trees initiated in a logical $\ket{\bar{0}}$ state of the code $\mathcal{C}$. Here an important difference is that each of the auxiliary $\ket{\bar{0}}$ states is first verified to be \good$_V$ as well, before applying the Fourier transform. This step is necessary, because one wants to make sure that the check in the Fourier basis does not introduce bit flips in the standard basis (at this point the check in standard basis for the secret state $\ket{\psi}$ has already been performed). Verifying each $\ket{\bar{0}}$ requires using extra $n$ single-qubit shares per node and is repeated $r$ times. Therefore, each node needs to control $3n$ qubits at this step: $n$ single-qubit shares for the encoded secret, $n$ single-qubit shares for a $\ket{\bar{0}}$ state, and additional $n$ single-qubit shares for the verification of $\ket{\bar{0}}$. In comparison, in \cite{Crepeau2002, Smith2001} all of the above steps are performed in parallel, and effectively, each node needs to control $\Omega(r^2n\log(n))$ at once.

In the verification phase the nodes publicly identify a set of \emph{apparent} cheaters $B$ with probability exponentially close to 1 in the security parameter $r$. Set $B$ includes all of the errors introduced by the dealer and errors introduced by the cheating nodes until the end of the verification phase. Note that there is no way to distinguish the errors introduced by the dealer and those introduced by the cheaters at this point.  The dealer will pass verification as ``honest'' if $|B| \leq t_q$. 
On the other hand, if $|B| \geq t_q$ then the protocol aborts.  

After the verification phase, the cheating nodes can still corrupt their shares. Therefore, the reconstructor $R$ runs an error correction circuit and measures syndromes, so that she can correct arbitrarily located errors introduced by the cheaters after the verification. If for a branch encoded by a particular node $i$ there have been more than $t_q$ errors, then $R$ adds that node to the set $B$ of cheaters. Otherwise, $R$ corrects errors and reconstructs branch $i$. After reconstructing all branches, she randomly picks $n - 2t_q$ shares which she has left, and reconstructs the state of the dealer. Importantly, the size of set $B$ cannot be larger than $2t_q$ at the end of the protocol. This is because the dealer $D$ and cheaters can introduce at most $t_q$ errors at the first level of encoding before verification (otherwise the protocol aborts). Before the reconstruction, the cheaters may introduce up to $t_q$ extra errors at the second level of each branch they hold. This may create extra errors at the first level, but never more than $t_q$, since the cheaters have some control over at most $t_q$ branches.

What is more, let $C_{\vqss}$ be the set of cheaters in the \vqss~and $C_{\vcss}$ the set of cheaters in \vcss.
We assume that if a node behaves maliciously in \vqss, it can also behave maliciously in \vcss, and moreover $C_{\vqss} = C_{\vcss}$. Therefore, we put $t = t_c = t_q$.
Moreover, in our \vhss~protocol we assume that the nodes have access to shared public source of randomness. This can be realized, for example, by running a classical verifiable secret sharing protocol or multipartite coin flipping. We remark that \cite{Smith2001} points out solutions to reduce the classical communication complexity of generating public randomness. In the following we will write $[1,n]$ to denote registers of nodes from $1$ to $n$.

\onecolumngrid
\small
\vspace{2em}
\begin{framed}
\noindent{Protocol 1: Verifiable Hybrid Secret Sharing (\vhss)}\\
\rule{\textwidth}{1pt}\\
\noindent Input: a qubit secret system $\ket{\psi}$ to share, CSS error correcting code $\mathcal{C} = V \cap \mathcal{F}W$.

\vspace{1em}
\noindent \textbf{SHARING}\\
\noindent \textit{Encryption}
\begin{enumerate}
\item The dealer $D$ encrypts her secret state $\ket{\psi}$ using quantum one-time pad with a classical key $s$, creating the state $\sigma_{QS}$, see Equation \eqref{eq:sigma_QS}.
\item $D$ shares the classical key $s$ among $n$ nodes using a verifiable classical secret sharing \vcss~protocol.
\end{enumerate}
\noindent \textit{Encoding}
\begin{enumerate}
\item $D$ encodes $\sigma_Q$ using $\mathcal{C}$ into ${\Phi}_{[1,n_q]}^{0,0}$, where $\sigma_Q$ is the reduced state of $\sigma_{QS}$.
\item for $i = 1,\dots,n_q$: \\
$D$ sends ${\Phi}_{i}^{0,0}$ to node $i$. \\
Each node $i$ encodes received systems using $\mathcal{C}$ into ${\Phi}_{i_{[1,n_q]}}^{0,0}$ and sends $j$-th component ${\Phi}_{i_j}^{0,0}$ to node $j$.

\end{enumerate}

\vspace{1em}
\noindent \textbf{VERIFICATION}\\ 
\textit{Z basis}

for $\ell=0$, $m = 1,\dots, r$:
\begin{enumerate}
\item $D$ prepares $\ket{\bar{+}}_{[1,n_q]}^{0,m} = \sum_{v\in V} \ket{v}$ using $\mathcal{C}$.
\item for $i = 1,\dots,n_q$: \\
$D$ sends $\ket{\bar{+}}_{i}^{0,m}$ to node $i$. \\
Each node $i$ encodes received systems using $\mathcal{C}$ into $\ket{\bar{+}}_{i_{[1,n_q]}}^{0,m}$ and sends $j$-th component $\ket{\bar{+}}_{i_j}^{0,m}$ to node $j$.
\item Nodes use shared public randomness source and get public random value $b_{0,m} \in_R \{0,1 \}$. Each node $j$:

\begin{enumerate}[label=(\alph*),topsep=0pt,itemsep=-1ex,partopsep=1ex,parsep=1ex]
\item applies the CNOT gate to her shares depending on the value of $b_{0,m}$ ($CNOT^{b_{0,m}}$). That is, for every qubit $i$, if $b_{0,m}= 0 $ the node does nothing, and if $b_{0,m}=1$ the node applies a CNOT gate with a qubit indexed by $m=0$ as a control to a qubit indexed by $m=1,\dots,r$ as a target:
\begin{align*}
\forall i=1,\dots,n_q:& \quad 
CNOT^{b_{0,m}}\left( {\Phi}_{i_j}^{0,0}, \ket{\bar{+}}_{i_j}^{0,m}  \right)
\end{align*}
\item \label{prot:broadcastZ1} measures all systems indexed $\ell = 0, ~m = 1,\dots,r$ in the $Z$ basis and broadcasts the result of the measurement. 
\end{enumerate}

\vspace{1em}

\noindent  \hspace{-3em} \textit{X basis}

\noindent \hspace{-2em} for $\ell=1,\dots,r$:

\item $D$ prepares $\ket{\bar{0}}_{[1,n_q]}^{\ell,0} = \sum_{w\in W^\perp} \ket{w}$ using $\mathcal{C}$.
\item for $i = 1,\dots,n_q$: \\
$D$ sends $\ket{\bar{0}}_{i}^{\ell,0}$ to node $i$. \\
Each node $i$ encodes received systems using $\mathcal{C}$ into $\ket{\bar{0}}_{i_{[1,n_q]}}^{\ell,0}$ and sends $j$-th component $\ket{\bar{0}}_{i_j}^{\ell,0}$ to node $j$.

\noindent \hspace{-1em} for  $m = 1,\dots, r$:

\item $D$ prepares $\ket{\bar{0}}_{[1,n_q]}^{\ell,m} = \sum_{w\in W^\perp} \ket{w}$ using $\mathcal{C}$.
\item for all $i = 1,\dots,n_q$: \\
$D$ sends $\ket{\bar{0}}_{i}^{\ell,m}$ to node $i$. \\
Each node $i$ encodes received systems using $\mathcal{C}$ into $\ket{\bar{0}}_{i_{[1,n_q]}}^{\ell,m}$ and sends $j$-th component $\ket{\bar{0}}_{i_j}^{\ell,m}$ to node $j$.

\item Nodes use shared public randomness source and get public random values $b_{\ell,m} \in_R \{0,1 \}$. Each node $j$:

\begin{enumerate}[label=(\alph*),topsep=0pt,itemsep=-1ex,partopsep=1ex,parsep=1ex]
\item applies the CNOT gate to her shares depending on the value of $b_{\ell,m}$ ($CNOT^{b_{\ell,m}}$):
\begin{align*}
\forall i=1,\dots,n_q:& \quad
CNOT^{b_{\ell,m}}\left( \ket{\bar{0}}_{i_j}^{\ell,0},\ket{\bar{0}}_{i_j}^{\ell,m} \right)
\end{align*}
\item \label{prot:broadcastZ2} measures the $m$-th system in the $Z$ basis and broadcasts the result of the measurement. 
\end{enumerate}

\item Nodes apply the Fourier transform $\mathcal{F}$ to all of their remaining shares, resulting in ${\Phi^\mathcal{F}}_{[1,n_q]_j}^{0,0}$ and $\ket{\bar{0}^\mathcal{F}}_{[1,n_q]_j}^{\ell,m}$ for each node $j$. {Note that $\ket{\bar{0}^\mathcal{F}} = \sum_{w\in W} \ket{w}$.}

\item Nodes use shared public randomness source and get public random values $b_{\ell,0} \in_R \{0,1 \}$. Each node $j$:
\begin{enumerate}[label=(\alph*),topsep=0pt,itemsep=-1ex,partopsep=1ex,parsep=1ex]
\item applies the CNOT gate to her shares depending on the value of $b_{\ell,0}$ ($CNOT^{b_{\ell,0}}$):

\begin{align*}
\forall i=1,\dots,n_q:& \quad 
CNOT^{b_{\ell,0}} \left({\Phi^\mathcal{F}}_{i_j}^{0,0}, \ket{\bar{0}^\mathcal{F}}_{i_j}^{\ell,0} \right)
\end{align*}
\item \label{prot:broadcastX} measures $\ell$-th system in the $Z$ basis and broadcasts the result of the measurement. 
\end{enumerate}

\item (Decoding leaves $Z$ basis) Broadcasted values in steps 3\ref{prot:broadcastZ1} and 8\ref{prot:broadcastZ2} yield words $\mathbf{v}_{\ell, m, i}$ from code $V$, corresponding to the second level of shares encoded by each node $i$. For each of the words, using classical decoding, the nodes:

\begin{enumerate}[label=(\alph*),topsep=0pt,itemsep=-1ex,partopsep=1ex,parsep=1ex]
\item obtain a decoded value $a_{\ell,m,i}$
\item publically check on which positions the errors have occurred, denote these positions by $B_{\ell,m,i}$. Nodes update sets $B_i = \cup_{\ell,m} B_{\ell,m,i}$ from the positions of errors which occurred in the systems encoded by node $i$. If $|B_i| > t$ then add $i$ to a global set $B$. 
\end{enumerate}  

\item (Decoding the root $Z$ basis) The nodes arrange values $a_{\ell,m,i}$ into $\mathbf{a}_{\ell,m} = \{a_{\ell,m,1},\dots, a_{\ell,m,n_q}\}$. Word $\mathbf{a}_{\ell,m}$ yields a classical codeword from the code $V$ and the nodes decode it using classical decoder of code $V$. They add the positions on which an error occurred to the global set $B$.

\item (Decoding leaves $X$ basis) Broadcasted values in step 10\ref{prot:broadcastX} yield words $\mathbf{w}_{\ell, 0, i}$ from code $W$, corresponding to the second level of shares encoded by each node $i$. For each of the words, using classical decoding, the nodes:

\begin{enumerate}[label=(\alph*),topsep=0pt,itemsep=-1ex,partopsep=1ex,parsep=1ex]
\item obtain a decoded value $a_{\ell,0,i}$
\item publically check on which positions the errors have occurred, and update sets $B_i$ and $B$ as before. Sets $B_i$ and $B$ are cumulative throughout the protocol.
\end{enumerate}  

\item (Decoding the root $X$ basis) Nodes create a codeword $\mathbf{a}_{\ell,0} = \{a_{\ell,0,1},\dots, a_{\ell,0,n_q}\}$ and decode it using classical decoder of code $W$. They add the positions on which an error occurred to the global set $B$. If $|B|>t$ then reject the dealer and abort. Otherwise continue.

\item Nodes apply an inverse Fourier transform $\mathcal{F}^{-1}$ to their remaining system and obtain global sharing of $D$ secret, i.e. each node $j$ holds ${\Phi}_{[1,n_q]_j}^{0,0}$. 

\end{enumerate}

\vspace{1em}
\noindent \textbf{RECONSTRUCTION}
\begin{enumerate}
\item Each quantum node $j=1,\dots,n_q$ sends their shares to the reconstructor $R$. Moreover, all of the $n_c$ classical nodes send their classical shares to $R$. 

\item $R$ reconstructs the classical secret key $s$ using a decoder of \vcss.

\item For each share ${\Phi}_{i_{[1,n]}}^{0,0}$ coming from encoding of node $i \notin B$, $R$ 
runs a circuit for code $\mathcal{C}$ which identifies errors. $R$ creates a set $\tilde{B}_i$ such that it contains $B_i$, $B_i \subseteq\tilde{B}_i$. If $|\tilde{B}_i|\leq t$ then errors are correctable, $R$ corrects them and decodes the $i$-th share, obtaining ${\Phi}_{i}^{0,0}$. Otherwise, $R$ adds $i$ to the global set $B$.
\item For all $i \notin B$, $R$ randomly chooses $n_q-2t$ shares ${\Phi}_{i}^{0,0}$ and applies an erasure-recovery circuit to them. $R$ obtains $\sigma_R$.

\item $R$ decrypts $\sigma_R$ using the classical key $s$ and obtains $\ket{\psi}$. 
\end{enumerate}
\end{framed}
\vspace{2em}
\normalsize
\twocolumngrid

\subsection{Security}

As discussed in previous sections, in the task of verifiable secret sharing we want to ensure that the dealer is honest and that at the end of the protocol there will be a well-defined state to be reconstructed. In this section we prove the security of Protocol 1 against $t$ non-adaptive active cheaters. First we state useful lemmas about the security of the \vqss~protocol of \cite{Crepeau2002}, which we use as a subroutine. For a detailed discussion we refer the reader to \cite{Smith2001}. We remark that we use an adapted version of \vqss~in the setting where we run the verification phase sequentially, i.e. one ancilla at a time, whereas in \cite{Crepeau2002} the verification is performed in a parallel setting, i.e. all ancillas together. In Appendix \ref{app:security_vqss} we prove that this fact does not change security statements of the original \vqss.

\begin{lemma}[soundness of \vqss]\label{lem:sound_vrqss}
In the verifiable quantum secret sharing protocol \cite{Crepeau2002}, either the honest parties hold a consistently encoded secret or dealer is caught and the protocol aborts with probability at least  $1 - 2^{-\Omega(r)}$ (see Equation \eqref{app:eq:Pabort} in Appendix \ref{app:security_vqss}).
\end{lemma}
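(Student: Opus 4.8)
# Proof Proposal for Lemma 1 (Soundness of VQSS)

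The plan is to reduce the soundness of our sequential variant of the VQSS protocol to the soundness of the parallel VQSS protocol of \cite{Crepeau2002, Smith2001}, where the analogous statement is already established, and then argue that the sequential reordering of operations does not affect the relevant probability bound. First I would recall the key structural fact from \cite{Smith2001}: the verification phase is designed to certify that the two-level encoding tree held by the honest nodes is \good$_{\mathcal{C}}$, which by definition means that when honest branches are measured in the standard (resp.\ Fourier) basis, there is a unique codeword of $V$ (resp.\ $W$) consistent with them, so that decoding is well-defined. The dealer passes verification only if the publicly identified set $B$ of apparent cheaters satisfies $|B|\le t_q$; otherwise the protocol aborts.

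The core of the argument is a case analysis on what an adversarial dealer (together with up to $t_q$ colluding nodes) can do. There are two failure modes to rule out: (i) the dealer is effectively dishonest — the honest branches do \emph{not} form a \good$_{\mathcal{C}}$ tree — yet the protocol does not abort; and (ii) conversely the abort probability when the tree is inconsistent is not large enough. For (i), the key steps are: first, observe that each of the $r$ iterations in the $Z$-basis check applies a random CNOT $CNOT^{b_{0,m}}$ from the secret tree onto a fresh $\ket{\bar{+}}$-tree and then measures the latter in the standard basis; if the secret tree carries a bit-error pattern that is \emph{not} correctable to a unique $V$-codeword on the honest positions, then with probability at least $1/2$ over $b_{0,m}$ this error is exposed in the broadcast syndrome, and hence goes undetected over all $r$ rounds only with probability $\le 2^{-r}$. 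Second, the symmetric argument in the Fourier basis uses the $\ket{\bar{0}}$-trees — here one additionally invokes that each $\ket{\bar{0}}^{\ell,m}$ is itself first checked to be \good$_V$ before the Fourier transform (Protocol~1, steps 6--9), so that the Fourier-basis check cannot be spoofed by a malformed ancilla. Taking a union bound over the two bases and the $\mathcal{O}(r^2)$ iterations gives a failure probability $2^{-\Omega(r)}$; this is exactly the quantity denoted $P_{\text{abort}}$ referenced in Equation~\eqref{app:eq:Pabort} of Appendix~\ref{app:security_vqss}.

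The step I expect to be the main obstacle is showing that running the ancilla checks \emph{sequentially} rather than in parallel does not weaken the bound. The subtlety is that an adaptive-looking dealer could, in principle, correlate the errors it introduces on ancilla $m+1$ with the public outcome $b_{0,m}$ and the broadcast results of round $m$. I would handle this by noting that each round consumes a freshly dealt ancilla tree and a freshly sampled public coin, and that the measured systems are discarded after each round; thus, conditioned on the entire transcript up to round $m$, the coin $b_{0,m+1}$ is still uniform and independent, so the ``probability $\ge 1/2$ of exposure per round'' estimate survives conditioning. Formalizing this requires a martingale-style argument over the rounds — I would phrase it as: the probability that a non-correctable error survives all $r$ rounds is the product of per-round survival probabilities, each bounded by $1/2$ given the past, hence $\le 2^{-r}$ overall. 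This is precisely the content deferred to Appendix~\ref{app:security_vqss}, where the equivalence of the sequential and parallel verification is established in detail; here in the main text I would simply cite that appendix for the quantitative bound and state the lemma as above.
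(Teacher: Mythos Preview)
Your per-round intuition is right and matches the paper's core step: a fresh public coin $b$ exposes an inconsistent tree with probability at least $1/2$, and since each coin is independent of the transcript so far, the survival probability over $r$ rounds is at most $2^{-r}$. Where your sketch diverges from the paper is in two places. First, you frame the argument as a reduction \emph{to} the parallel protocol of \cite{Crepeau2002,Smith2001}; the paper does not do this. It re-proves soundness directly in the sequential setting, and the tool it uses to do so is an explicit ``quantum-to-classical'' reduction: because CNOT and standard-basis measurement commute, one can pretend the honest nodes measured their shares \emph{before} the verification rounds (Equations~\eqref{app:eq:MCNOT}--\eqref{app:eq:q-to-c}), turning the problem into a purely classical one where your ``bit-error pattern'' language becomes literally correct. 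Your sketch jumps straight to that classical picture without justifying it, which is the main technical gap --- the shared state is generically a superposition of $V_t$ and $V_t^\perp$ components, so a binary ``either the tree is \good$_{\mathcal C}$ or not'' case split is not well-posed until you invoke this commutation.

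Second, and relatedly, the paper does not use a bare case analysis plus union bound. It decomposes $\rho_{sh}$ into weights $a$ (on $V_t$) and $b$ (on $V_t^\perp$), computes $\Pr[\neg\text{abort}] = a + pb$ with $p\le 2^{-r}$, and formulates the conclusion as a \emph{dichotomy with a free threshold} $\delta$: either $\Pr[v_{0,0}\in V_t^C \mid \neg\text{abort}] > 1-\delta$, or $\Pr[\text{abort}] \ge 1 - 2^{-r}/\delta$ (Equation~\eqref{app:eq:Pr_V}, and analogously \eqref{app:eq:Pr_W}, \eqref{app:eq:Pr_V0} for the $W$-check and the $\ket{\bar 0}$-ancilla checks). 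The three dichotomies are then combined by conditioning on $b_Z,b_X\neq 0$ and $\ket{\bar 0}\in V_t^0$ to obtain the final bound~\eqref{app:eq:Pabort}. This threshold formulation is what lets the paper handle the quantum superposition cleanly; your martingale-style product bound is the right engine for the classical side, but you should say explicitly that the quantum-to-classical reduction is what licenses running it.
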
 

\begin{lemma}[completeness of \vqss]\label{lem:compl_vrqss}
In the verifiable quantum secret sharing protocol \cite{Crepeau2002}, if $D$ is honest then she passes the verification phase. Moreover, if $R$ is also honest she reconstructs $D$'s secret with probability at least $1 - 2^{-\Omega(r)}$, where $r$ is the security parameter (see Equation \eqref{app:eq:completeness} in Appendix \ref{app:security_vqss}).
\end{lemma}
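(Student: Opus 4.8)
The plan is to follow Protocol~1 (equivalently the \vqss~scheme of \cite{Crepeau2002}, whose sequential adaptation is shown to be equivalent in Appendix~\ref{app:security_vqss}) phase by phase under the hypothesis that $D$ is honest and at most $t\le\lfloor\tfrac{d-1}{2}\rfloor$ nodes deviate, importing the quantitative soundness bound of Lemma~\ref{lem:sound_vrqss} only where a cheater could otherwise slip through undetected. First I would argue that an honest dealer passes verification with certainty: if $D$ is honest then the root encoding $\Phi^{0,0}_{[1,n_q]}$ is a valid $\mathcal{C}$-codeword, every auxiliary tree $\ket{\bar{+}}^{0,m}$, $\ket{\bar{0}}^{\ell,0}$, $\ket{\bar{0}}^{\ell,m}$ is correctly prepared, and every honest node re-encodes with $\mathcal{C}$ and applies all $CNOT^{b}$ gates, Fourier transforms and measurements as prescribed; hence the only errors anywhere in the two-level tree are introduced by the $\le t$ cheaters, and they are confined either to a branch whose creator cheats, or to the $\le t$ positions held by cheaters inside an honest branch. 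Consequently, in the classical-decoding steps of Protocol~1 every honest branch $i$ carries at most $t\le\lfloor\tfrac{d-1}{2}\rfloor$ errors --- correctable by $V$ in the $Z$ basis and by $W$ in the $X$ basis --- so $|B_i|\le t$ and $i\notin B$, while the root words $\mathbf{a}_{\ell,m},\mathbf{a}_{\ell,0}$ agree with $D$'s honest codewords on every honest position, so no honest node is added to $B$ at the root either. Thus $B$ is contained in the set of genuine cheaters, $|B|\le t$, the abort condition $|B|>t$ never fires, and $D$ passes.

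Next I would analyse reconstruction with an honest $R$. After verification the surviving cheaters may still corrupt the shares they hold, adding at most $t$ further errors to any branch $i\notin B$; together with Lemma~\ref{lem:sound_vrqss} and the honesty of $D$, this yields that --- except with probability $2^{-\Omega(r)}$ --- the honest parties hold a global state consistent with $D$'s encoded secret on every branch $i\notin B$. When $R$ runs the syndrome-extraction circuit of $\mathcal{C}$ on such a branch the total error weight is at most $t$, so $|\tilde{B}_i|\le t$, $R$ decodes $\Phi^{0,0}_i$ to the value dictated by $D$'s codeword, and no honest node is added to $B$; since $|B|\le t$ throughout, $R$ is left with at least $n_q-t$ correct root-level shares. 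Erasure-recovery applied to any $n_q-2t$ of them succeeds because $\mathcal{C}$ corrects $p_q\le d-1$ erasures and $2t\le d-1$, and it outputs exactly $D$'s state. A union bound over the $\mathcal{O}(r)$ checks and the $n_q$ branches keeps the overall failure probability at $2^{-\Omega(r)}$, matching the bound in the statement; the sequential/parallel equivalence of Appendix~\ref{app:security_vqss} then transfers it to the protocol actually used.

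The delicate point --- which I expect to be the main obstacle --- is the single case the deterministic argument does not settle: a cheater $i$ that is never flagged ($i\notin B$) yet whose branch encodes a value subtly different from $D$'s. One must show that such a branch nonetheless decodes to the correct value except with probability $2^{-\Omega(r)}$; this is exactly where the $r$ independent random linear-combination checks (the public bits $b_{\ell,m}$ driving the $CNOT^{b}$ gates) and the \good$_V$/\good$_W$ structure enter, and it forces the proof to invoke the quantitative estimate of Lemma~\ref{lem:sound_vrqss} rather than reason by inspection. The remaining ingredients --- confinement of cheater errors, the distance inequalities $t\le\lfloor\tfrac{d-1}{2}\rfloor$ and $2t\le d-1$, and the union bound --- are routine.
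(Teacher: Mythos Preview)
Your proposal is correct and closely parallels the paper's argument. In particular, your first paragraph --- that an honest $D$ never triggers $|B|>t$ because every error in the two-level tree is confined to the at most $t$ cheater positions --- is exactly the paper's reasoning, just spelled out in more detail.

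The one genuine methodological difference is how you obtain the $2^{-\Omega(r)}$ bound for the reconstruction part. You propose to invoke Lemma~\ref{lem:sound_vrqss} as a black box: since an honest $D$ never aborts, soundness forces the post-verification state to be \good$_\mathcal{C}$ except with probability $2^{-\Omega(r)}$, and honesty of $D$ then pins down which codeword it is. The paper instead performs a short direct calculation: it re-applies the quantum-to-classical reduction with $D$ honest and counts the ways a cheater's deviation could survive all checks, arriving at the explicit constant $\epsilon_c=(2+r)2^{-r}$ (your union bound over $\mathcal{O}(r)$ checks would recover the same order). The paper then converts this into a fidelity bound $F\geq 1-\epsilon_c$ for the reconstructed state. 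Both routes are valid; yours is more modular (it reuses the heavy lifting already done for soundness), while the paper's is self-contained and yields the sharper explicit constant referenced in the lemma statement. Your ``delicate point'' --- a cheater $i\notin B$ whose branch decodes to a value inconsistent with $D$'s share --- is precisely the failure event the paper's $\epsilon_c$ bounds, so you have identified the right obstacle; you would just need to note that the deterministic part of your argument already shows it can only arise for cheater-owned branches, after which either route closes the gap.
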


Using the above lemmas we now show that our \vhss~protocol, Protocol 1, is sound and complete. 

\begin{theorem}[soundness]\label{thm:sound_vhss}
In the verifiable hybrid secret sharing protocol, Protocol 1,  either the honest parties hold a consistently encoded secret or dealer is caught and the protocol aborts with probability at least  $1 - 2^{-\Omega(r)}$.
\end{theorem}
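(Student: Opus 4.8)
The plan is to split the verification phase of Protocol 1 into its quantum and classical halves, apply a soundness guarantee to each, and combine them by a union bound. Observe that the dealer passes the \vhss~verification if and only if she passes both the \vqss~verification (the ``Verification'' block of Protocol 1, run on the encrypted register carrying $\sigma_Q$) and the \vcss~verification that is executed as part of ``Sharing'' when the key $s$ is distributed. Accordingly, it suffices to show that, conditioned on not aborting, each half leaves the honest nodes with a well-defined object---a \good$_\mathcal{C}$ encoding for the quantum shares and a unique key $s$ for the classical shares---except with probability $2^{-\Omega(r)}$, and that the conjunction of these two facts is precisely the statement that the honest parties hold a consistently encoded secret in the sense of Definition \ref{def:hybrid_scheme}.

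First I would treat the quantum half. The \vqss~subroutine used here is the sequential variant of the protocol of \cite{Crepeau2002}; the claim established in Appendix \ref{app:security_vqss} is that distributing the $\ket{\bar{+}}$ and $\ket{\bar{0}}$ auxiliary trees one at a time rather than in parallel leaves its soundness unchanged, so that Lemma \ref{lem:sound_vrqss} applies verbatim. It gives the dichotomy: either the dealer is caught and the protocol aborts (Step 15 of ``Verification''), or, except with probability $2^{-\Omega(r)}$, the shares held by the honest nodes form a \good$_\mathcal{C}$ encoding of a well-defined state $\sigma_Q$. By the error-accounting in the main text---the apparent-cheater set $B$ has size at most $t_q$ when the dealer passes and at most $2t_q$ at the end---this encoding survives the at most $t_q$ further corruptions the cheaters may introduce after verification, so the erasure-recovery circuit run by $R$ in ``Reconstruction'' returns a unique $\sigma_R = \sigma_Q$.

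Next, for the classical half I would invoke the soundness of the underlying \vcss. The scheme of \cite{Rabin1989} is a verifiable classical secret sharing protocol with error $2^{-\Omega(r')} = 2^{-\Omega(r)}$: if the dealer passes its verification then there is a unique key $s$ that every honest node, and in particular the honest reconstructor, will reconstruct, and otherwise the honest nodes reject and the overall protocol aborts. Only soundness of the classical scheme is needed here; Assumption \ref{assump:vcss}, which concerns secrecy against a quantum adversary, enters only in the secrecy proof.

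Finally I would combine the two halves. Conditioned on neither half failing, $R$ recovers a well-defined encrypted state $\sigma_R = \sigma_Q$ together with a well-defined key $s = ab$, and applying the decryption map $X^aZ^b(\cdot)Z^bX^a$ produces a unique state $\ket{\psi}$---which is exactly what it means for the honest parties to hold a consistently encoded secret. A union bound over the two failure events (the \vqss~soundness failure and the \vcss~soundness failure) yields total failure probability at most $2\cdot 2^{-\Omega(r)} = 2^{-\Omega(r)}$, giving the theorem. The main obstacle is not a single computation but the justification, deferred to Appendix \ref{app:security_vqss}, that sequentializing the auxiliary-tree checks preserves Lemma \ref{lem:sound_vrqss}, together with the bookkeeping that ``\good$_\mathcal{C}$ on all honest branches'' plus ``at most $t_q$ post-verification corruptions'' is precisely the resource the reconstruction circuit needs to output a unique state.
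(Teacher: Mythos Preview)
Your proposal is correct and follows essentially the same approach as the paper: split the hybrid protocol into its \vqss~and \vcss~components, invoke Lemma~\ref{lem:sound_vrqss} for the former and the $2^{-\Omega(r')}$ soundness of the \cite{Rabin1989} scheme (with $r'=r$) for the latter, and combine via a union bound. Your write-up is in fact more detailed than the paper's---you spell out what ``consistently encoded secret'' means at the level of $\sigma_Q$ and $s$, and you correctly note that Assumption~\ref{assump:vcss} is not needed here---but the skeleton is identical.
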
 

\begin{proof}
The soundness of the hybrid protocol is a combination of soundness statements for the \vqss~and \vcss~protocols. Formally, we need to bound the probability that one of the protocols fails,
\begin{align}\label{eq:pr_fail}
\tn{Pr}\left[ \tn{fail}_\vqss \vee \tn{fail}_\vcss  \right] \leq \tn{Pr}\left[ \tn{fail}_\vqss \right] + \tn{Pr} \left[ \tn{fail}_\vcss  \right].
\end{align}
Let us first consider $\tn{Pr} [\tn{fail}_\vcss]$. Consider the protocol of \cite{Rabin1989} whose probability of failure scales exponentially with a security parameter $r'$. We choose $r'$ such that it is equal to the security parameter of \vqss, $r'=r$, and therefore, $\tn{Pr} [\tn{fail}_\vcss] \leq 2^{-\Omega(r)}$. 

On the other hand, by Lemma~\ref{lem:sound_vrqss}, the \vqss~protocol can fail with probability $\tn{Pr} [\tn{fail}_\vqss] \leq 2^{-\Omega(r)}$. 
Therefore, we obtain
\begin{align}
\tn{Pr}\left[ \tn{fail}_\vqss \vee \tn{fail}_\vcss  \right] \leq 2^{-\Omega(r)}.
\end{align}
\end{proof}

\begin{theorem}[completeness]
In the verifiable hybrid secret sharing protocol, Protocol 1, if $D$ is honest then she passes the verification phase. Moreover, if $R$ is also honest she reconstructs $D$'s secret with probability at least $1 - 2^{-\Omega(r)}$, where $r$ is the security parameter.
\end{theorem}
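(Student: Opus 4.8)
The plan is to mirror the structure of the soundness proof (Theorem~\ref{thm:sound_vhss}) and reduce completeness of Protocol~1 to the completeness statements of its two constituent subroutines, namely the \vqss~protocol (Lemma~\ref{lem:compl_vrqss}) and the \vcss~protocol of \cite{Rabin1989}. First I would observe that if $D$ is honest, then $D$ correctly encrypts $\ket{\psi}$ into $\sigma_{QS}$, correctly shares the reduced state $\sigma_Q$ via the \vqss~encoding tree, and correctly shares the classical key $s$ via \vcss. Thus honesty of $D$ in the hybrid protocol entails honesty of $D$ in both subroutines simultaneously, so the two ``dealer passes verification'' guarantees apply directly: by Lemma~\ref{lem:compl_vrqss} an honest $D$ always passes the \vqss~verification phase, and by the completeness of the \vcss~scheme of \cite{Rabin1989} an honest $D$ is not rejected there either (except with probability $2^{-\Omega(r')}=2^{-\Omega(r)}$, since we fixed $r'=r$). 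Since the verification phase of Protocol~1 rejects only if one of these subroutine verifications rejects, an honest $D$ passes the verification phase of the hybrid protocol.

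Second, I would handle the reconstruction guarantee when $R$ is also honest. Here the argument is: an honest $R$ reconstructs the classical key $s$ exactly from the \vcss~shares (completeness of \cite{Rabin1989}), and reconstructs the encoded quantum state $\sigma_Q$ via the \vqss~reconstruction circuit, obtaining $\sigma_R = \sigma_Q$ by Lemma~\ref{lem:compl_vrqss}. Then step~5 of Reconstruction applies the decryption $X^aZ^b$ (with $s=ab$) to $\sigma_R$, which inverts the quantum one-time pad from Equation~\eqref{eq:sigma_QS} and yields exactly $\ket{\psi}$. The only sources of error are the two subroutine failure events, so I would bound the total error by a union bound exactly as in Equation~\eqref{eq:pr_fail}: $\tn{Pr}[\tn{fail}] \leq \tn{Pr}[\tn{fail}_\vqss] + \tn{Pr}[\tn{fail}_\vcss] \leq 2^{-\Omega(r)} + 2^{-\Omega(r)} = 2^{-\Omega(r)}$, which gives the claimed success probability $1-2^{-\Omega(r)}$.

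The one point requiring care — and what I expect to be the main obstacle — is justifying that the sequential (one-ancilla-at-a-time) version of the \vqss~verification used in Protocol~1 inherits the completeness statement of the parallel version of \cite{Crepeau2002,Smith2001}. For honest $D$ this is less delicate than for soundness: since an honest dealer prepares all auxiliary $\ket{\bar{+}}$ and $\ket{\bar{0}}$ trees correctly and honest nodes re-encode them correctly, each CNOT-and-measure check in the $Z$ basis and in the $X$ basis (including the intermediate \good$_V$ check of each $\ket{\bar{0}}$ before the Fourier transform) passes deterministically regardless of the public random bits $b_{\ell,m}$, and no honest node is ever placed in the set $B$; hence $|B|\leq t$ holds and the dealer is accepted. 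I would invoke the analysis in Appendix~\ref{app:security_vqss} showing that sequentialization does not alter the \vqss~security statements, so that Lemma~\ref{lem:compl_vrqss} applies verbatim to the version used here, and then simply combine it with the \vcss~completeness as above. I would also note that cheaters corrupting their shares cannot cause an honest $D$ to fail, because the error-counting is per-branch and bounded: at most $t$ post-verification errors per branch remain correctable, and an honest $R$ only discards branches with more than $t$ errors, of which there are at most $t$, still leaving at least $n_q-2t$ good shares for the final erasure-recovery step.
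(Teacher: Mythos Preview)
Your proposal is correct and follows essentially the same approach as the paper: first argue that an honest $D$ introduces no errors in either subroutine so the verification phase accepts, then bound the reconstruction failure probability by the union bound $\tn{Pr}[\tn{fail}_\vqss']+\tn{Pr}[\tn{fail}_\vcss']\leq 2^{-\Omega(r)}$ using Lemma~\ref{lem:compl_vrqss} and the completeness of the \vcss~scheme of \cite{Rabin1989} with $r'=r$. Your additional paragraphs on the decryption step and the sequentialization issue are more explicit than the paper's own proof (which defers the latter entirely to Appendix~\ref{app:security_vqss}), but they do not change the argument.
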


\begin{proof}
For the first part of the theorem, observe that an honest dealer always passes the verification phase. Indeed, if the dealer is honest, she does not introduce any errors, neither in the \vqss, nor in the \vcss~protocol. Moreover, by the assumption that active cheaters $t$ are always bounded by the number of tolerable errors, the \vhss~protocol can always correct the arising errors and the verification phase always accepts an honest dealer. 

For the second part of the theorem, as in the soundness statement, we calculate the probability that the \vhss~protocol fails with an honest dealer, 
\begin{align}
\tn{Pr}\left[ \tn{fail}_\vqss' \vee \tn{fail}_\vcss'  \right]\leq \tn{Pr}\left[ \tn{fail}_\vqss' \right] + \tn{Pr} \left[ \tn{fail}_\vcss'  \right].
\end{align}
For the classical \vcss~protocol, as before, we consider the protocol of \cite{Rabin1989}. By choosing the security parameter of the classical protocol such that $r'=r$, we obtain $\tn{Pr} [\tn{fail}_\vcss']\leq 2^{-\Omega(r)}$.
For the \vqss~protocol, if $R$ is also honest, by Lemma \ref{lem:compl_vrqss} the probability that the verification phase fails to identify the set $B$ of apparent malicious nodes, occurs with probability $2^{-\Omega(r)}$, see Appendix \ref{app:security_vqss} for details. Therefore,
\begin{align}
\tn{Pr}\left[ \tn{fail}_\vqss' \vee \tn{fail}_\vcss'  \right]\leq 2^{-\Omega(r)}.
\end{align}
\end{proof}

The encryption of the secret with a classical key has significant consequences for the secrecy of the \vhss~scheme. We expand on it in the theorem below. Note that in a \vqss~\cite{Crepeau2002} the secrecy property holds for any $p_q \leq 2t_q$ nodes not being able to learn any information about the dealer's secret. However, in our \vhss~scheme we choose a classical scheme such that $p_c = p > 2t_q$, and therefore, we lift the secrecy of the \vqss~scheme (for a detailed discussion see Sec.~\ref{subsubsec:private_vhss} below).

\begin{theorem}[secrecy]\label{thm:privacy}
In the verifiable hybrid secret sharing protocol, Protocol 1, when $D$ is honest and there is at most $t$ active cheaters in the verification phase, no group of at most $p=p_c$ nodes learns anything about $D$'s secret state throughout the protocol, where $p_c$ is the secrecy of the underlying classical scheme, except with probability exponentially small in the security parameter $r$.
\end{theorem}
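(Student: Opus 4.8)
The plan is to reduce the secrecy of the hybrid scheme to two facts: (i) any $p$ nodes learn nothing about the classical key $s$ from the \vcss~subroutine (Assumption~\ref{assump:vcss}), and (ii) without knowledge of $s$, the quantum shares held by those $p$ nodes are maximally mixed and hence carry no information about $\ket{\psi}$. The quantum one-time pad is the workhorse: averaging $X^aZ^b\ketbra{\psi}Z^bX^a$ over uniformly random $ab\in\{0,1\}^2$ yields the maximally mixed qubit $\mathds{1}/2$. So an adversary who is ignorant of $s$ sees, as the input to the \vqss, a state that is independent of $\ket{\psi}$; by the secrecy of the \vqss~(which, for the chosen CSS code, hides the encoded qubit from any $p_q\le d-1$ nodes, and in particular from any group of the size we consider), the adversary's view during the \vqss~sharing and verification is also independent of $\ket{\psi}$.

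\textbf{Key steps, in order.} First I would fix an honest $D$ and an adversarial set $P$ with $|P|\le p=p_c$, and write the global state at the end of the verification phase as a classical-quantum state: conditioned on the key value $s=ab$, the adversary holds some state $\rho^{ab}_{P}$ consisting of their \vcss~transcript/side-information $\mathcal{Q}_P$ together with their quantum shares of the \vqss~tree. Second, I would invoke Assumption~\ref{assump:vcss}: the \vcss~leaks nothing about $s$ to $P$, so the reduced state of $\mathcal{Q}_P$ is (up to error $2^{-\Omega(r)}$) the same for every value of $s$ — equivalently, $s$ is uniform conditioned on $\mathcal{Q}_P$. Third, I would observe that the only other place $s$ enters the adversary's view is through the encrypted state $\sigma_Q=X^aZ^b\ketbra{\psi}Z^bX^a$ that is fed into the \vqss; since $s$ is still uniform given everything else $P$ holds, averaging over $s$ collapses $\sigma_Q$ to $\mathds{1}/2$, a fixed state independent of $\ket{\psi}$. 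Fourth, I would feed this $\mathds{1}/2$ through the \vqss~sharing and the (sequential) verification: because the verification operations performed by honest nodes and the broadcast transcript are determined by public randomness and by the shares — and the shares of any $p\le p_q$ of them reveal nothing about the encoded logical qubit — the adversary's total view is a fixed channel applied to $\mathds{1}/2$, hence independent of $\ket{\psi}$. Finally I would collect the two $2^{-\Omega(r)}$ error terms (from \vcss~secrecy and from the \vqss~identification of the cheater set $B$) via a union bound to get the claimed exponentially small failure probability.

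\textbf{The main obstacle} is step four: making precise that correlations introduced during the \emph{verification} phase do not leak $\ket{\psi}$. In the sequential protocol the adversary in $P$ actively participates — applies CNOTs, broadcasts measurement outcomes on the auxiliary $\ket{\bar{+}}$ and $\ket{\bar{0}}$ trees, and can choose their operations adaptively on the public transcript. One must argue that, since $D$ is honest, all auxiliary trees are genuine logical codestates, the CNOTs are transversal logical operations, and the only systems ever measured are the ancilla trees (never the secret tree), so the secret tree's reduced state on $P$ is exactly what it was after sharing, namely the $\le p_q$-share restriction of a codeword encoding $\mathds{1}/2$. This is essentially the secrecy argument of \cite{Crepeau2002,Smith2001} for the \vqss, but it has to be checked that running the ancillas one at a time rather than in parallel does not change it — which is exactly the content deferred to Appendix~\ref{app:security_vqss}, so I would cite that and keep the argument here at the level of ``the adversary's view is a fixed function of public randomness applied to a codeword of $\mathds{1}/2$, which is $\ket{\psi}$-independent.'' A secondary subtlety is composition order: the \vcss~and \vqss~run in parallel and the side information $\mathcal{Q}_P$ could in principle be timed to correlate with \vqss~randomness, but since Assumption~\ref{assump:vcss} is stated to hold even against quantum side information available at the end of the \vhss~verification phase, it already covers this, and I would state that explicitly.
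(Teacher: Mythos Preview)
Your steps (i)--(iii) match the paper's proof exactly: Assumption~\ref{assump:vcss} makes $s$ uniform from $P$'s point of view, and averaging the one-time pad over uniform $s$ gives $\sigma_Q=\mathds{1}/2$, a fixed state independent of $\ket{\psi}$.

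Where you diverge is step four. You invoke the secrecy of the \vqss\ (``the shares of any $p\le p_q$ of them reveal nothing about the encoded logical qubit'') and then worry at length about whether the sequential verification preserves it, deferring to Appendix~\ref{app:security_vqss}. The paper does none of this, and for a good reason: once the \emph{input} to the entire quantum subprotocol is the fixed state $\mathds{1}/2$, every CPTP map applied downstream---the dealer's encoding $\mathcal{E}_Q$, the honest nodes' second-level encodings, the adversary's arbitrary operations, all the verification rounds with their ancilla trees and broadcasts---is a map whose input does not depend on $\ket{\psi}$, so its output cannot either. This holds for \emph{all} $n_q$ quantum shares simultaneously, not just for $p_q$ of them; \vqss\ secrecy is never used. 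Your appeal to $p\le p_q$ is therefore not only unnecessary but would be a genuine gap if you actually relied on it: the whole point of the theorem is that $p=p_c$ may exceed $p_q=d-1$ (this is exactly the ``lifting'' in Lemmas~\ref{lem:max_private_vhss} and~\ref{lem:more_private_vhss}), so ``in particular from any group of the size we consider'' is false in the interesting regime.

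Two smaller points. Your ``main obstacle''---that verification might leak $\ket{\psi}$---dissolves under the same observation: the ancilla trees $\ket{\bar{+}},\ket{\bar{0}}$ prepared by an honest $D$ are $\ket{\psi}$-independent, so the joint input to every verification round is $\ket{\psi}$-independent, and no analysis of transversal CNOTs or sequential-vs-parallel is needed for secrecy (Appendix~\ref{app:security_vqss} is about soundness/completeness, not secrecy). And the second $2^{-\Omega(r)}$ term you collect ``from the \vqss~identification of the cheater set $B$'' does not belong here: secrecy does not depend on correctly identifying cheaters; the only exponentially small failure comes from Assumption~\ref{assump:vcss}.
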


\begin{proof}
The state describing the dealer's encrypted quantum secret and the randomly chosen classical encryption key $s=ab$ is
\begin{align}\label{eq:sigma_QS}
\sigma_{QS} =  \sum_{ab = \{0,1\}^2} \frac{1}{4} X^aZ^b \ketbra{\psi}_Q Z^bX^a \otimes \ketbra{ab}_{S}
\end{align}
where $Q$ is the quantum register of the dealer and $S$ is the classical register of the encryption key.
By Assumption \ref{assump:vcss} the classical \vcss~scheme is secure and does not leak any information about the key $s=ab$ to any set of $p_c$ nodes, even in the presence of a quantum adversary, except with probability exponentially small in the security parameter $r$. Therefore, without the knowledge of the encryption key $s$, the quantum state shared by the dealer as seen by the rest of the nodes is maximally mixed,
\begin{align}
\begin{split}\label{eq:sigma_D}
\sigma_Q & = \tr_{S}(\sigma_{QS}) = \\
& =\sum_{ab = \{0,1\}^2} \frac{1}{4} X^aZ^b \ketbra{\psi}_Q Z^bX^a  = \frac{\mathds{1}_Q}{2}.
\end{split}
\end{align}
Before sending out the shares, the dealer applies an encoding $\mathcal{E}_{Q}$ to the quantum register $Q$, so that 
\begin{align}
\forall \ket{\psi} \quad 
\tr_{S}((\mathcal{E}_{Q}\otimes \mathds{1}_S) (\sigma_{QS}))& = 
\mathcal{E}_{Q} (\tr_{S}(\sigma_{QS})) \\ & = \mathcal{E}_{Q} (\sigma_Q) =:
\rho_{[1, n_q]},
\end{align}
where $\rho_{[1, n_q]}$ is an $n_q$-qubit state sent by the dealer to $n_q$ nodes. Importantly, since $\mathcal{E}_{Q}$ and $\sigma_Q$, Equation \eqref{eq:sigma_D}, are independent of $\ket{\psi}$, $\rho_{[1, n_q]}$ is also independent of $\ket{\psi}$. Subsequently, the honest nodes do their encoding $\mathcal{E}$, and the malicious nodes can perform any (CPTP) operation $\mathcal{A}$ that they desire. After this step, since $\mathcal{E}$ and $\mathcal{A}$ do not depend on $\ket{\psi}$, the state of the $n_q$ nodes $\rho_{[1, n_q]}'$ is independent of $\ket{\psi}$. In the classical scheme any group of $p_c$ or fewer nodes has no information about $s$. Hence, the partial state of any $p = p_c$ or fewer nodes in \vhss~does not depend on $\ket{\psi}$ and no information about the dealer's secret can be obtained, except with probability exponentially small in $r$.

\end{proof}

\subsection{Verifiable Hybrid Schemes}\label{sec:ver_hybrid_schemes}

Our protocol for \vhss, Protocol 1, leads to a variety of schemes, depending on the parameters of the underlying \vqss~and \vcss~protocols. In this section we discuss the trade-offs between those parameters and specify what schemes can be achieved with our protocol.

\subsubsection{Verifiable schemes with maximum secrecy}\label{subsubsec:private_vhss}

In any \vqss~scheme based on an error correcting code with distance $d$, any group of at most $d-1$ nodes cannot recover information about the secret. As mentioned before, this is due to the fact that a code of distance $d$ can correct up to $d-1$ erasures, and therefore any $n-(d-1)$ nodes can recover the state perfectly. In particular, it implies that $d-1$ nodes do not have any information about the encoded state \cite{Gottesman2000}. Quantum Singleton bound \cite{Knill2000} allows that $n \leq 2d-1$ for codes encoding a single qubit. 
The construction of \cite{Crepeau2002} saturates this inequality, and therefore allows for attaining $p = \left\lfloor\tfrac{n-1}{2}\right\rfloor$, which we refer to as maximum secrecy. However, this construction uses systems of local dimension $q > n$ and is based on quantum Reed-Solomon codes \cite{Aharonov1997}. 

To remedy this problem, we use a \vqss~scheme based on CSS codes with single-qubit shares, at the cost of reducing secrecy. However, in our \vhss~scheme, we combine this with a classical scheme for which $p_c>2t_q$. Specifically, the \vcss~protocol of \cite{Rabin1989} tolerates up to $\left\lfloor\tfrac{n-1}{2}\right\rfloor$ cheaters. This allows us to maximally lift the secrecy of the quantum scheme to the one attainable by the \vqss~of \cite{Crepeau2002}. 

\begin{lemma}[\vhss~with maximum secrecy]\label{lem:max_private_vhss}
Given a $[[n,1,d]]_2$ CSS error correcting code and a \vcss~scheme tolerating up to $ \left\lfloor\tfrac{n-1}{2}\right\rfloor$ classical active cheaters, Protocol 1 provides a way to construct a $\{\left\lfloor\tfrac{n-1}{2}\right\rfloor,t,n\}$-\vhss~scheme with maximum secrecy $p = \left\lfloor\tfrac{n-1}{2}\right\rfloor$, tolerating $t \leq \left\lfloor\tfrac{d-1}{2}\right\rfloor$ active cheaters, where all of the shares are used to recover the quantum secret state. 
\end{lemma}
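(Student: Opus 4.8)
The plan is to verify that the parameter choices in the lemma are simultaneously consistent with the three constituent results already established, and that the resulting scheme meets Definition~\ref{def:hybrid_scheme} with maximum secrecy. Since Protocol~1 is built by running a \vqss~protocol based on a $[[n,1,d]]_2$ CSS code in parallel with a \vcss~protocol, the first step is bookkeeping of the parameters: the CSS code corrects $t_q \leq \left\lfloor\tfrac{d-1}{2}\right\rfloor$ arbitrary errors and $p_q \leq d-1$ erasures, while the chosen \vcss~tolerates up to $\left\lfloor\tfrac{n-1}{2}\right\rfloor$ active cheaters and hence provides secrecy $p_c = \left\lfloor\tfrac{n-1}{2}\right\rfloor$. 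Setting $t = t_c = t_q$ as in the protocol description, and noting that $t_q \leq \left\lfloor\tfrac{d-1}{2}\right\rfloor$ is compatible with $t_c \leq \left\lfloor\tfrac{n-1}{2}\right\rfloor$ whenever the underlying code satisfies the quantum Singleton bound $n \leq 2d-1$ (which is exactly the CSS codes we use), shows that the claimed range $t \leq \left\lfloor\tfrac{d-1}{2}\right\rfloor$ is attainable. I would also note that reconstruction uses all $n$ shares because $n_q = n_c = n$ and the reconstructor needs $n_q - 2t$ surviving quantum shares, which is the generic (non-ramp) case.

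The second step is to assemble soundness and completeness, which require no new work: Theorem~\ref{thm:sound_vhss} already establishes soundness of Protocol~1 with failure probability $2^{-\Omega(r)}$, and the completeness theorem immediately above it does the same for completeness, both conditioned only on $t$ being within the correctable range, which we arranged in step one. The third and only substantive step is secrecy. Here I would invoke Theorem~\ref{thm:privacy}, which guarantees that no group of at most $p = p_c$ nodes learns anything about $\ket{\psi}$, together with the fact that the chosen \vcss~of \cite{Rabin1989} achieves $p_c = \left\lfloor\tfrac{n-1}{2}\right\rfloor$. Plugging $p_c = \left\lfloor\tfrac{n-1}{2}\right\rfloor$ into Theorem~\ref{thm:privacy} yields exactly $p = \left\lfloor\tfrac{n-1}{2}\right\rfloor$, which matches the maximum-secrecy bound $p \leq \left\lfloor\tfrac{n-1}{2}\right\rfloor$ dictated by no-cloning, so the scheme is optimal in this respect.

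The main obstacle, such as it is, is not a deep one: it is checking that maximum secrecy from the classical layer is not silently capped by the quantum layer. In a bare \vqss~on a CSS code, secrecy is only $p_q \leq d - 1$, which for the single-qubit CSS codes we use is strictly less than $\left\lfloor\tfrac{n-1}{2}\right\rfloor$; the point of the hybrid construction is that the quantum shares carry only the one-time-padded (hence maximally mixed, $\ket{\psi}$-independent) state, so the effective secrecy is inherited entirely from the classical key sharing. This is precisely what the proof of Theorem~\ref{thm:privacy} shows, so the lemma follows by specializing that theorem to $p_c = \left\lfloor\tfrac{n-1}{2}\right\rfloor$. I would close by remarking that the failure probabilities of all three ingredients are $2^{-\Omega(r)}$ (after choosing $r' = r$ for the classical scheme), so the union bound keeps the overall error exponentially small, as required by Definition~\ref{def:hybrid_scheme}.
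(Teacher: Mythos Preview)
Your proposal is correct and follows essentially the same route as the paper: the lemma is stated there without a separate proof, as an immediate specialization of Theorems~\ref{thm:sound_vhss}--\ref{thm:privacy} once the \vcss~of \cite{Rabin1989} with $p_c = \left\lfloor\tfrac{n-1}{2}\right\rfloor$ is plugged in, and your write-up makes exactly this assembly explicit.

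One minor slip worth flagging: your parenthetical about the quantum Singleton bound is both unnecessary and stated in the wrong direction. The bound for an $[[n,1,d]]$ code reads $n \geq 2d-1$, not $n \leq 2d-1$, and the qubit CSS codes used here (e.g.\ Steane's $[[7,1,3]]$, toric, color codes) do \emph{not} saturate it. Fortunately the compatibility you want, namely $\left\lfloor\tfrac{d-1}{2}\right\rfloor \leq \left\lfloor\tfrac{n-1}{2}\right\rfloor$, holds automatically from $d \leq n$, so you can simply drop the Singleton remark without affecting the argument.
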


Furthermore, we can explore other classical verifiable schemes in the context of lifting secrecy in \vhss. In~\cite{Stinson2000}  a classical \vcss~scheme was presented, which has a strong secrecy property: any $p_c > t_c$ nodes cannot learn any information about the classical secret (for comparison, in the scheme of \cite{Rabin1989} $p_c = t_c$). However, this scheme is able to tolerate up to $t_c \leq \left\lfloor\tfrac{n_c-1}{4}\right\rfloor$ active classical cheaters. Additionally, there exists a trade-off between the number of nodes $n$, and the numbers of cheaters, i.e. $n_c \geq p_c+3t_c+1$ (for details see Section 3.2 of \cite{Stinson2000}). Consequently, this allows us to construct a \vhss~scheme lifting the secrecy beyond $\tfrac{n}{2}$, but at the cost of tolerating less active cheaters $t$. Note that the classical scheme was proven to be information theoretically secure against a classical adversary, and by Assumption \ref{assump:vcss} we assume it remains information theoretically secure against quantum adversary. Moreover, the protocol was shown to be perfectly secure, i.e. with zero probability of error. Therefore, secrecy achieved in a \vhss~which uses this protocol as a subroutine, is exact and does not depend on the security parameter $r$.

\begin{lemma}\label{lem:more_private_vhss}
Given a $[[n,1,d]]_2$ CSS error correcting code and a \vcss~scheme with $n \geq p + 3t +1$, Protocol 1 provides a way to construct a $\{p,t,n\}$-\vhss~scheme. In particular, to achieve $p > \left\lfloor\tfrac{n-1}{2}\right\rfloor$ the scheme tolerates $t < \frac{1}{3}\left( n-p - 1\right)$ active cheaters. All of the shares are used to recover the quantum secret state. 
\end{lemma}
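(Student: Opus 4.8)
The plan is to derive this lemma as a direct specialisation of the general security analysis of Protocol~1 --- the soundness theorem (Theorem~\ref{thm:sound_vhss}), the completeness theorem, and the secrecy theorem (Theorem~\ref{thm:privacy}) --- with the classical subroutine of \cite{Stinson2000} in place of that of \cite{Rabin1989}. Concretely, I would instantiate Protocol~1 with the given $[[n,1,d]]_2$ CSS code as the quantum building block and the \vcss~scheme of \cite{Stinson2000} as the classical building block. That classical scheme has secrecy $p_c$ and tolerates $t_c$ active cheaters whenever $p_c > t_c$ and $n_c \geq p_c + 3t_c + 1$, and it is moreover \emph{perfectly} secure (zero error probability). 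Adopting the convention $C_{\vqss}=C_{\vcss}$, hence $n = n_q = n_c$, $p = p_c$, $t = t_q = t_c$, fixed in Section~\ref{subsec:protocol}, the parameter constraint $n \geq p + 3t + 1$ of the statement is inherited verbatim, together with the usual quantum-side requirement $t \leq \left\lfloor\tfrac{d-1}{2}\right\rfloor$ needed for the CSS code to correct the cheaters' errors.

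Next I would verify the three requirements of Definition~\ref{def:hybrid_scheme}. Soundness and completeness are unchanged from Theorem~\ref{thm:sound_vhss} and the completeness theorem: those proofs only use that the classical scheme fails with probability $2^{-\Omega(r)}$ (here in fact $0$, since \cite{Stinson2000} is perfect) and that $t$ active cheaters stay within the error-correcting capacity of the quantum code, neither of which is sensitive to which admissible \vcss~is used. Secrecy is exactly Theorem~\ref{thm:privacy}: the quantum one-time pad renders the encoded state sent by the dealer independent of $\ket{\psi}$ as long as the key $s$ is hidden, and by Assumption~\ref{assump:vcss} the \vcss~of \cite{Stinson2000} hides $s$ from any set of $p_c$ nodes even against quantum side information; since that scheme is perfect, the secrecy is here moreover \emph{exact}, independent of $r$. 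Thus Protocol~1 instantiated this way is a $\{p,t,n\}$-\vhss~scheme, and --- exactly as in Lemma~\ref{lem:max_private_vhss} --- reconstruction uses all $n$ shares.

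For the ``in particular'' clause I would just rearrange the inherited inequality: $n \geq p + 3t + 1$ gives $t \leq \tfrac{1}{3}(n - p - 1)$, so any integer $t < \tfrac{1}{3}(n-p-1)$ (equivalently $t \leq \left\lfloor\tfrac{n-p-1}{3}\right\rfloor$) is tolerated, and this region is non-empty and compatible with $p > \left\lfloor\tfrac{n-1}{2}\right\rfloor$ precisely when $n - p - 1 \geq 3$. I do not anticipate a genuine obstacle: the only delicate point is the robustness of \cite{Stinson2000} against a quantum adversary, which is exactly what Assumption~\ref{assump:vcss} postulates (and which the paper has already flagged); everything else is a bookkeeping check of the parameter inequalities and an appeal to the already-proven security theorems for Protocol~1.
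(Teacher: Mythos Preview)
Your proposal is correct and follows exactly the paper's approach: the paper does not give a separate formal proof of this lemma but treats it as an immediate consequence of the preceding paragraph, which instantiates Protocol~1 with the \vcss~scheme of \cite{Stinson2000}, invokes the already-proven soundness, completeness and secrecy theorems (noting via Assumption~\ref{assump:vcss} that \cite{Stinson2000} is perfectly secure even against quantum side information), and reads off the parameter constraint $n \geq p + 3t + 1$ from the classical scheme. Your write-up is in fact more explicit than the paper's own treatment; the only minor remark is that the rearrangement of $n \geq p + 3t + 1$ yields $t \leq \tfrac{1}{3}(n-p-1)$ rather than the strict inequality, so the lemma's ``$t<$'' is a slight looseness in the statement itself, not in your argument.
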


\subsubsection{Threshold verifiable schemes}

In the literature of secret sharing schemes, one often considers schemes which have a property called \emph{threshold} \cite{Blakley1979, Shamir1979}. This property can be stated as the requirement that there exists $p>0$, such that no subset of less than $p$ shares reveals any information about the state of the dealer, while any subset of $p+1$ shares allows to perfectly reconstruct the state. 
Importantly, in such schemes, there are no actively cheating nodes in the protocol. 

Since in Protocol 1 we allow for the existence of active cheaters, let us consider a definition of a threshold scheme when there are $t>0$ active cheaters. We will call it a strong threshold scheme. In this case, in the reconstruction phase the reconstructor $R$ receives shares from $p+1 = n-t'$ of the nodes. Among those, up to $t$ of them can be arbitrarily corrupted.

\begin{definition}[strong threshold scheme]\label{def:threshold}
A strong threshold (verifiable) secret sharing scheme is a scheme where:
\begin{enumerate}[topsep=0pt,itemsep=-1ex,partopsep=1ex,parsep=1ex]
 \item \label{point_1}Any set of shares held by $p = n-t'-1$ nodes does not reveal any information about the secret state. 
 \item \label{point_2}The reconstructor is able to perfectly reconstruct the secret state with the set of shares from any $n - t'$ nodes. 
 \end{enumerate}
 The above conditions hold in the presence of $t>0$ active cheaters.
\end{definition}

In the literature of classical verifiable secret sharing  a similar definition of threshold is satisfied in the presence of cheaters. For example, the scheme of \cite{Schoenmakers1999} considers a situation when honest shares are flagged. Therefore, the reconstructor knows which $n-t'$ honest shares to pick for the reconstruction. However, in our case, the reconstructor \emph{does not} know which shares are honest and which are not. In such a situation, this definition cannot be satisfied, which we show in the following proposition.

\begin{proposition}
It is impossible to construct a strong threshold secret sharing scheme according to Definition~\ref{def:threshold}.
\end{proposition}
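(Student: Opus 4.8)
The plan is to derive a contradiction from the combination of condition~\ref{point_1} (secrecy: any $n-t'-1$ nodes learn nothing) and condition~\ref{point_2} (reconstruction: any $n-t'$ nodes, up to $t$ of which may be corrupted, suffice for perfect recovery) in the presence of $t>0$ active cheaters, using the fact that the reconstructor cannot tell honest shares from corrupted ones. First I would fix a set $S$ of $n-t'$ nodes that $R$ uses for reconstruction, and partition it as $S = H \cup C$ where $|C| = t$ is the set of active cheaters and $H = S \setminus C$ is a set of $n-t'-t$ honest nodes. Since the cheaters may submit arbitrary shares, the only guaranteed-correct information $R$ has comes from $H$. By condition~\ref{point_2}, $R$ must reconstruct perfectly even when the cheaters are maximally adversarial — in particular even when they submit shares that are themselves a valid set of shares for a \emph{different} secret state.

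The key step is the following cloning/indistinguishability argument. Consider two distinct secret states $\ket{\psi_0} \neq \ket{\psi_1}$. By condition~\ref{point_1}, the set $H$ of $n-t'-t \leq n-t'-1$ honest nodes (using $t\geq 1$) has no information about the secret, so the reduced state on $H$ is identical whether the dealer shared $\ket{\psi_0}$ or $\ket{\psi_1}$. Now imagine the true dealt state is $\ket{\psi_0}$, but the $t$ cheaters in $C$ replace their shares with shares consistent with $\ket{\psi_1}$ (they can do this because honestly-generated shares of $\ket{\psi_1}$ constitute an allowed adversarial action on $t$ registers). Then the joint state on $S = H \cup C$ seen by $R$ is supported on: an $H$-marginal that is secret-independent, together with a $C$-part engineered toward $\ket{\psi_1}$. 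Condition~\ref{point_2} demands that $R$ output $\ket{\psi_0}$. But by symmetry — starting from a true dealt state $\ket{\psi_1}$ with honest shares on $H$ (same marginal!) and cheaters in $C$ contributing shares consistent with $\ket{\psi_0}$ — condition~\ref{point_2} equally demands that $R$ output $\ket{\psi_1}$. Since $R$'s input can be made identical (or arbitrarily close) in the two scenarios while the required outputs $\ket{\psi_0}$ and $\ket{\psi_1}$ are orthogonal, this is impossible: no CPTP map can send one input to two orthogonal outputs. Equivalently, if such a scheme existed, one could use $H$'s shares to produce two copies of an unknown secret (hand them to two reconstructors, each completing $H$ with self-generated "cheater" shares aimed at decoding), violating no-cloning \cite{Wootters1982}.

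I would present this cleanly as: assume for contradiction that a scheme satisfying Definition~\ref{def:threshold} exists with $t \geq 1$. Let $P$ be any set of $n - t' - t$ honest nodes; since $n-t'-t \leq n-t'-1$, condition~\ref{point_1} gives that the marginal state on $P$ carries zero information about $\ket{\psi}$. Fix any other $t$ nodes to form $S \supseteq P$ with $|S| = n-t'$. Define a "reconstruction strategy from $P$ alone" by having a would-be reconstructor locally prepare shares for an arbitrary fixed state on the remaining $t$ registers of $S$ and then run $R$'s algorithm; condition~\ref{point_2} (with those $t$ nodes playing the role of cheaters) guarantees this outputs the true $\ket{\psi}$ exactly. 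But this is a channel acting only on the $P$-marginal, which is independent of $\ket{\psi}$, so its output is independent of $\ket{\psi}$ — contradicting that the output equals $\ket{\psi}$ for every $\ket{\psi}$. Hence no such scheme exists.

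The main obstacle is making precise the claim that the cheaters "can submit shares consistent with another secret," i.e. that having $t$ adversarial registers is at least as powerful as having $t$ registers prepared to decode toward an attacker-chosen state — this is where one must invoke that the adversary is active and unrestricted on its $t$ shares, and that $R$ has no side information flagging honest shares (contrasting \cite{Schoenmakers1999}). A secondary subtlety is the exact-versus-approximate reconstruction: Definition~\ref{def:threshold} asks for \emph{perfect} reconstruction, which makes the contradiction sharp; if one only asked for approximate reconstruction the argument would instead give a quantitative bound, but that is not needed here. I would flag both points briefly but not belabor them.
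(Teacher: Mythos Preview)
Your proposal is correct and follows essentially the same approach as the paper: both argue that condition~\ref{point_2} forces $R$ to be able to recover $\ket{\psi}$ from the $n-t'-t$ honest shares alone (since the $t$ cheater shares are arbitrary), while condition~\ref{point_1} says those $n-t'-t \leq n-t'-1$ shares carry no information about $\ket{\psi}$, yielding a contradiction unless $t=0$. The paper compresses this into a two-line counting argument ($n-t'-t > n-t'-1 \Rightarrow t<1$), whereas you spell out the same contradiction more explicitly via the CPTP/no-cloning formulation; the extra indistinguishability discussion in your first two paragraphs is sound but not needed once you have the clean version in your third paragraph.
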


\begin{proof}
From point \ref{point_2} of Definition \ref{def:threshold} we have that $R$ must be able to reconstruct the secret state from any $n-t'$ shares, in particular, 
she must be able to do so when receiving $n-t'-t$ honest shares and $t$ arbitrary ones. This implies that she is able to recover the state from the $n-t'-t$ honest shares alone. On the other hand, from point \ref{point_1} of Definition \ref{def:threshold} no $n-t'-1$ shares reveal any information, which implies that we must have $n-t'-t> n-t'-1$. The only way to satisfy this inequality is when $t=0$. 
\end{proof}

\noindent \textbf{Remark.} Similarly to  \cite{Schoenmakers1999}, it is possible to add a flagging system to Protocol 1 using  techniques from \cite{Barnum2002, Crepeau2005}. Indeed, there, one uses a quantum authentication scheme to flag whether the shares are honest or not. However, as mentioned before, this happens at a significant qubit cost. Since our objective is to reduce the number of qubits, we explore a alternative direction in the next section.

\subsubsection{Ramp verifiable schemes}

In the previous section, we have seen that it is impossible to construct a strong threshold scheme which tolerates active cheaters according to Definition \ref{def:threshold}. In particular, this result also applies to verifiable schemes. Therefore, here we allow for a gap between the number of nodes $p$ that obtain no information about the secret and the number of nodes $n-t'$ necessary to reconstruct the secret, and we introduce a definition of a ramp verifiable scheme.

\begin{definition}\label{def:ramp}
A ramp verifiable secret sharing scheme is a scheme where any $n - t'$ nodes can reconstruct the secret, but any $p$ nodes cannot gain any information about the secret state, for some $p < n-t'-1$. The scheme can verify the dealer in the presence of $t$ active cheaters. We denote such a scheme with $\{p,t,t',n\}$-ramp.
\end{definition}

Relating to discussion in Section~\ref{subsubsec:private_vhss}, we see that the purely quantum \vqss~scheme of \cite{Crepeau2002} allows for constructing a ramp scheme with secrecy $p \leq \left\lfloor\tfrac{n-1}{2}\right\rfloor$. However, for qubit CSS codes this equality is not saturated. Therefore, as before we use a classical scheme \cite{Rabin1989} to increase the value of $p$ (lift the secrecy) as compared to the purely quantum ramp scheme. We obtain the following result.

\begin{lemma}[Ramp \vhss]
Given a $[[n,1,d]]_2$ CSS error correcting code and a \vcss~scheme tolerating up to $\left\lfloor\tfrac{n-1}{2}\right\rfloor$ classical active cheaters, Protocol 1 provides a way to construct a \ramp~scheme with $p = \left\lfloor\tfrac{n-1}{2}\right\rfloor$, where the quantum state can be recovered with shares from any $n-t'$ nodes in the presence of $t$ active cheaters, and $t + t' \leq \left\lfloor\tfrac{d-1}{2}\right\rfloor$.
\end{lemma}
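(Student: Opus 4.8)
The plan is to combine the ramp construction at the level of the \vqss~(i.e. using a CSS code with enough distance to tolerate both active cheaters and missing shares) with the secrecy‑lifting argument already established for the non‑ramp case. First I would observe that the ramp structure comes entirely from the quantum side: instead of requiring shares from all $n_q$ nodes for reconstruction, the reconstructor $R$ should be able to recover the encoded qubit from shares held by any $n - t'$ nodes. Since an $[[n,1,d]]_2$ CSS code corrects up to $\lfloor\tfrac{d-1}{2}\rfloor$ arbitrary errors and up to $d-1$ erasures, if we treat the $t'$ absent nodes as erasures and the $t$ corrupted shares among the remaining $n - t'$ as arbitrary errors, then correct reconstruction is possible whenever $2t + t' \leq d-1$, i.e. $t + t' \leq \lfloor\tfrac{d-1}{2}\rfloor$ suffices (one may even reconstruct from $n - 2t - t'$ good shares as in the reconstruction step of Protocol 1, with the $t'$ missing shares simply not contributing). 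This handles reconstruction; completeness and soundness then follow verbatim from Theorem~\ref{thm:sound_vhss} and the completeness theorem, since the verification phase is unchanged and the dealer is still caught except with probability $2^{-\Omega(r)}$ — one only needs that the arising errors remain within the correctable regime, which is exactly the hypothesis $t + t' \leq \lfloor\tfrac{d-1}{2}\rfloor$.

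Next I would argue secrecy. Here the point is that, as in Theorem~\ref{thm:privacy}, the quantum state the dealer actually distributes is the encrypted state $\sigma_Q = \mathds{1}_Q/2$, which carries no information about $\ket{\psi}$; hence any coalition learns nothing about $\ket{\psi}$ unless it learns the classical key $s$. By Assumption~\ref{assump:vcss}, applied to the \vcss~of \cite{Rabin1989} which tolerates up to $\lfloor\tfrac{n-1}{2}\rfloor$ classical cheaters and therefore has secrecy $p_c = \lfloor\tfrac{n-1}{2}\rfloor$, no group of $p = p_c = \lfloor\tfrac{n-1}{2}\rfloor$ nodes obtains any information about $s$, except with probability exponentially small in $r$. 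Combining this with the fact that the encoding maps $\mathcal{E}_Q$, the honest nodes' encodings $\mathcal{E}$, and any cheating operation $\mathcal{A}$ are all independent of $\ket{\psi}$, exactly the chain of equalities in the proof of Theorem~\ref{thm:privacy} shows that the reduced state of any $p$ such nodes is independent of $\ket{\psi}$. So secrecy $p = \lfloor\tfrac{n-1}{2}\rfloor$ holds.

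The one thing to check carefully — and what I expect to be the main obstacle — is the consistency of the two trade‑offs: the classical scheme forces $t = t_c < \tfrac{n}{2}$, while the quantum side now imposes the stricter constraint $t + t' \leq \lfloor\tfrac{d-1}{2}\rfloor$ coming from the code distance, and one must verify these are simultaneously satisfiable together with $p = \lfloor\tfrac{n-1}{2}\rfloor < n - t' - 1$ (the genuine ramp gap of Definition~\ref{def:ramp}). Since $t + t' \leq \lfloor\tfrac{d-1}{2}\rfloor$ and $d \leq n$ for these codes, we get $t' \leq \lfloor\tfrac{d-1}{2}\rfloor < \tfrac{n}{2}$, so $n - t' - 1 > \tfrac{n}{2} - 1 \geq \lfloor\tfrac{n-1}{2}\rfloor - 1$, and with a small amount of case analysis on the parity of $n$ one confirms the strict inequality $p < n - t' - 1$ holds whenever $t' \geq 1$, placing the construction genuinely in the ramp regime; the binding constraint is indeed $t + t' \leq \lfloor\tfrac{d-1}{2}\rfloor$, as stated. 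Finally I would note that $C_{\vqss} = C_{\vcss}$ and $t = t_c = t_q$, as fixed in the protocol description, so there is no mismatch between the cheater sets of the two subprotocols, and the union bound of Equation~\eqref{eq:pr_fail} delivers the overall $2^{-\Omega(r)}$ error.
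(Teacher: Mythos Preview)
Your proposal is correct and follows the same line of reasoning the paper relies on: the paper does not give a separate proof of this lemma at all, treating it as an immediate consequence of the preceding soundness, completeness, and secrecy theorems together with the discussion that the classical \vcss~of \cite{Rabin1989} lifts secrecy to $p=\lfloor\tfrac{n-1}{2}\rfloor$ while the CSS code handles $t$ errors and $t'$ erasures. Your write-up is in fact more explicit than the paper's, spelling out the $2t+t'\leq d-1$ erasure/error budget and the invocation of Theorem~\ref{thm:privacy}; the only minor wobble is the parity case analysis for the ramp gap $p<n-t'-1$, where your chain of inequalities lands at $n-t'-1>\lfloor\tfrac{n-1}{2}\rfloor-1$ rather than the needed $>\lfloor\tfrac{n-1}{2}\rfloor$, but this is easily repaired (and the paper does not check it either).
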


By putting $t'=0$ we require reconstruction with all of the shares and recover the result of Lemma \ref{lem:max_private_vhss}. Note that if we are interested in maximizing the number of cheaters and minimizing the number of the shares necessary for reconstruction, we can put $t = t' =  \left\lfloor\tfrac{d-1}{4}\right\rfloor$.

\section{Outlook}\label{sec:outlook}
We presented a protocol which achieves the task of sharing a quantum secret in a verifiable way, which reduces the number of qubits necessary to realize the protocol. In our scheme each node requires an $n$-qubit quantum memory and a workspace of at most $3n$ qubits in total. By combining classical encryption with a quantum scheme we showed that we can construct a variety of verifiable hybrid schemes attaining maximum secrecy. We proved that our protocol is secure in the presence of active non-adaptive adversary. 

We remark that there is a dependence between the number of cheaters tolerated by a verifiable secret sharing protocol and quantum resources necessary to realize it. 
The number of cheaters can be increased to $2t$ by using approximate quantum error correction based on quantum authentication schemes \cite{Barnum2002, Crepeau2005}. Indeed, in \cite{BenOr2006} the authors showed that by employing quantum authentication techniques, the \vqss~scheme of \cite{Crepeau2002} can tolerate up to $\frac{n}{2}$ malicious nodes. 
In this case, the power of the verification scheme increases up to the number of tolerable erasures of the code, and one can effectively tolerate twice as many malicious nodes. However, authentication schemes typically require another level of error correction, where the size of the code scales exponentially in the security parameter of the authentication. Therefore, such schemes increase the number of qubits required to realize the protocol.

\section{Acknowledgments}
We thank J. Helsen, B. Dirkse and P. Mazurek for valuable discussions and insights. This work was supported by STW Netherlands, NWO VIDI grant, ERC Starting grant and NWO Zwaartekracht QSC. GM was also funded by the Deutsche Forschungsgemeinschaft (DFG, German Research Foundation) under Germany's Excellence Strategy – Cluster of Excellence 
Matter and Light for Quantum Computing (ML4Q) EXC 2004/1 – 390534769.


%

\appendix

\section{Security of the \vqss~scheme \cite{Crepeau2002} in the sequential setting} \label{app:security_vqss}

\begin{proof}[Proof of Lemma~\ref{lem:sound_vrqss}]

Here we state the soundness of the \vqss~protocol. Since we use the \vqss~in the sequential setting instead of the original parallel one, we restate security in the sequential setting. Our techniques are inspired by the approach suggested in  \cite{Crepeau2002,Smith2001}.

To prove the soundness of the \vqss~protocol, we bound the probability that the state held by the nodes after the verification phase is close to a codeword in $\mathcal{C} = V \cap \mathcal{F} W$ with at most $t$ errors on the first level of encoding in the verification phase, or that the protocol aborts, and therefore, the dealer is caught. $V$ denotes a space spanned by  $\{\ket{v}: v \in V^C\}$, where $V^C$ is a classical code space. Similarly, $\mathcal{F} W$ is spanned by $\{\mathcal{F}\ket{w}: w\in W^C\}$, where $\mathcal{F}$ is the Fourier transform and $W^C$ is a classical code space such that the dual code $V^{C*} \subseteq W^C$. 

Recall that in the protocol we encode the secret of the dealer into two levels of encoding. We will argue that performing verification on the second level of encoding is equivalent to verification on the first level of encoding. If a state is encoded once using $\mathcal{C}$, and has at most $t$ errors, then the encoding defines a unique state. Therefore, it is enough to count the number of errors  present in the first level of encoding and verify that there are at most $t$. However, the protocol requires two levels of encoding to make sure that no node has complete control over all shares. This implies that we cannot perform the verification directly at the first level. But since all the operations we use for verification are  (essentially) transversal for code $\mathcal{C}$, we can argue about the verification as if it was performed on the first level.

In order to check for errors, it is enough to check for errors in the $Z$ basis and errors in the $X$ basis. 
Let $V_t$ be the space of words that have at most $t$ errors in the $Z$ basis as compared to a codeword in $V$. In particular, if one measures a state $\ket{v} \in V_t$ in the $Z$ basis, the outcome is a word in the space $V_t^C$, where $V_t^{C}$ is the space of strings having at most $t$ compared to a string in the classical code $V^C$. Similarly, we can define $(\mathcal{F} W)_t$ as the space of words that have at most $t$ errors in the $X$ basis as compared to a codeword in $W$. This means that if one measures a state $\ket{w} \in (\mathcal{F}W)_t$ in the $X$ basis, the outcome is a word in the space $W_t^C$, where $W_t^{C}$ is the space of strings having at most $t$ compared to a string in the classical code $W^C$.

Considering the above argument, now we proceed with proving soundness of verification of the state in the $Z$ basis and as if we were considering only one level of encoding.

Without loss of generality, we can decompose the state of the nodes after the sharing phase in spaces $V_t$ and $V_t^\perp$,
\begin{align}\label{app:eq:psi_sh}
\rho_{sh} = \sum_{i}q_i \ketbra{\psi_i},
\end{align}
with 
$
\ket{\psi_i} = a_i \ket{\tilde{\psi}_i} + b_i \ket{\tilde{\psi}_i^\perp},
$
where $\ket{\tilde{\psi}_i} \in V_t$ and $\ket{\tilde{\psi}_i^\perp} \in V_t^\perp$.
In words, the state after the sharing phase is a mixture of pure states which have components in $V_t$ and $V_t^\perp$. 

Moreover, let $\rho_{ver(Z)}$ be the state of all the nodes after the verification phase in the $Z$ basis.

We will show that 
\begin{center}
``conditioned on not aborting, the state $\rho_{ver(Z)}$ is close to a codeword in the space $V_t$ or the verification phase aborts with high probability''.
\end{center} 
By definition of the space $V_t$, $\rho_{ver(Z)}$ belongs to $V_t$, if by measuring it in the $Z$ basis one obtains with certainty an outcome corresponding to a string $v\in V_t^C$. Therefore, we will quantify ``the state $\rho_{ver(Z)}$ is close to a codeword in the space $V_t$'' with a high probability of getting an outcome $v\in V_t^C$ when measuring $\rho_{ver(Z)}$. Alternatively, one can think of a situation in which first a measurement on the initial state is performed and then the verification takes place. To prove the security statement we will use a tool called ``quantum-to-classical'' reduction, which relates the statistics obtained in the two situations. That is, in order to compute the probability of aborting in the verification phase of the \vqss~protocol or the probability that the resulting state is in $V\cap \mathcal{F}W$, we will analyze the situation in which the state is  measured \emph{before} the verification.

\vspace{1em}
\textbf{Probability of aborting.} In order to evaluate probability of aborting, we will follow the solution suggested in \cite{Smith2001} for the parallel execution of the \vqss~and we will show how to use this result for the sequential setting. To do so, let us fix a round $(0,m)$, with $m>0$. For this round we can use the ``quantum-to-classical'' reduction. It states that the two following situations are equivalent: \emph{(i)} the honest nodes measure their shares of $\rho_{ver(Z)}$ in the standard basis at the end of the verification phase; \emph{(ii)} the honest nodes measure their shares of $\rho_{sh}$ and an $m$-th ancilla right after they have been distributed, i.e. before running the verification of round $(0,m)$. 
Formally, 
\begin{align} \label{app:eq:MCNOT}
\forall m ~   \mathcal{M}_0 \mathcal{M}_m CNOT_{0,m}^{b_{0,m}} = \mathcal{M}_m CNOT_{0,m}^{b_{0,m}}\mathcal{M}_m \mathcal{M}_0
\end{align}
where $\mathcal{M}_0$ and $\mathcal{M}_m$ denote measurements of the state of the nodes and $m$-th ancilla respectively. $CNOT_{0,m}^{b_{0,m}}$ denotes a CNOT gate performed with $\rho_{sh}$ as a control and the $m$-th ancilla as target. 
Note that if the nodes perform measurements right after the shares are distributed (situation \emph{(ii)}) they only need to handle classical data from that moment on. Therefore, ``quantum-to-classical'' reduction means that the verification phase of the quantum \vqss~protocol ($Q$-protocol) can be reduced to a corresponding verification in a classical protocol ($C$-protocol). 
That is to say, measurement outcomes in $Q$-protocol and $C$-protocol are exactly the same and the moment when the measurement is performed does not change the behavior of the protocol. Since the measurement is performed in the standard basis and the CNOT gate acts as a bit flip in the standard basis, the two operations commute. 

Let us look now at the sequential execution of $Q$-protocol and $C$-protocol. Expanding the above dependence onto $m$ sequential rounds, we obtain
\begin{align} \label{app:eq:q-to-c}
\begin{split}
& \mathcal{M}_0 \mathcal{M}_r CNOT_{0,r}^{b_{0,r}} \dots \mathcal{M}_1 CNOT_{0,1}^{b_{0,1}} = \\
 & = \mathcal{M}_r CNOT_{0,r}^{b_{0,r}}\mathcal{M}_r \dots \mathcal{M}_1 CNOT_{0,1}^{b_{0,1}} \mathcal{M}_1 \mathcal{M}_0
\end{split}
\end{align}
In particular, this means that the probability of aborting in the sequential $Q$-protocol can be reduced to considering the probability of aborting in the sequential $C$-protocol, 
\begin{align}\label{app:eq:abortQ_abortC}
\tn{Pr}[ \neg \tn{abort}_Q] =  \tn{Pr}[ \neg \tn{abort}_C].
\end{align} 

Consider the corresponding $C$-protocol for round $(\ell = 0,m)$: the nodes have \emph{classical} bit strings $v_{0,0}$ and $v_{0,m}$. They wish to verify whether $v_{0,0}$ is a string in the space $V_t^{C}$. To do so the (honest) nodes compute bit-wise $v_{0,m} + b_{0,m}v_{0,0}$ according to public random bit $b_{0,m}$. They broadcast the result and create the set of apparent cheaters $B$. 

In the $C$-protocol, the string $v_{0,0}$ can either be a string in $V_t^C$ or not. This depends on the shared state \eqref{app:eq:psi_sh}, and therefore happens with probabilities 
\begin{align}
\tn{Pr}[v_{0,0} \in V_t^C] =  \sum_i q_i |a_i|^2 =:a, \\
\tn{Pr}[ v_{0,0} \notin V_t^C] = \sum_i q_i |b_i|^2 =:b,
\end{align}
respectively. Indeed, the probability that any of the $\ket{\psi_i}$ from \eqref{app:eq:psi_sh} yields a string from $V_t^C$ (resp.~not in $V_t^C$) is given by $|a_i|^2$ (resp.~$|b_i|^2$). 
In the case when $v_{0,0}$ is a string in $V_t^C$, the verification always passes and we have that $\tn{Pr}[\neg \tn{abort}_C | v_{0,0} \in V_t^C] = 1$. On the other hand, if $v_{0,0}$ is not a string in $V_t^C$, then for all bit strings $v_{0,m}$ there exists at most one bit $b_{0,m}$ such that $v_{0,m} + b_{0,m}v_{0,0}$ is a string in $V_t^C$. Since $b_{0,m}$ is chosen independently of $v_{0,m}$ and $v_{0,0}$, and uniformly at random, the probability that $v_{0,m} + b_{0,m}v_{0,0}$ a codeword is at most $\tfrac{1}{2}$. Since the above is true for any value of $v_{0,m}$, in particular it must be true even if $v_{0,m}$ depends on the previous rounds $1,\dots,m-1$. Therefore, the overall probability $p$ that the verification phase of the $C$-protocol does not abort given that $v_{0,0}$ is not a string in $V_t^C$, is at most 
\begin{align} \label{app:eq:p_bound}
p = \tn{Pr}[\neg \tn{abort}_C | v_{0,0} \notin V_t^C] \leq 2^{-r}.
\end{align}
The above consideration allows us to write that the probability of the $C$-protocol not aborting is 
\begin{align}
\begin{split}
\tn{Pr}[ \neg \tn{abort}_C] & = \tn{Pr}[v_{0,0} \in V_t^C] \tn{Pr}[\neg \tn{abort}_C | v_{0,0} \in V_t^C] \\
& ~+ \tn{Pr}[v_{0,0} \notin V_t^C] \tn{Pr}[\neg \tn{abort}_C | v_{0,0} \notin V_t^C].
\end{split}
\end{align} 
Since $ \tn{Pr}[ \neg \tn{abort}_Q] =  \tn{Pr}[ \neg \tn{abort}_C]$, Equation \eqref{app:eq:abortQ_abortC}, in the $Q$-protocol we have
\begin{align} \label{app:eq:a'b'}
 \tn{Pr} \left[\neg \tn{abort}_Q \right] = a + pb.
\end{align}

\vspace{1em}
\textbf{Probability of measuring a string in $V_t^C$.} 

Now our objective is to evaluate $\tn{Pr} \left[v_{0,0} \in V_t^C | \neg \tn{abort}_Q \right]$.
By ``quantum-to-classical'' reduction argument \eqref{app:eq:q-to-c}, we know that the $C$-protocol should yield the same statistics as the $Q$-protocol, 
\begin{align}
\tn{Pr} \left[v_{0,0} \in V_t^C | \neg \tn{abort}_Q \right] = \tn{Pr} \left[v_{0,0} \in V_t^C | \neg \tn{abort}_C \right].
\end{align}
From the considerations about the probability of aborting, using the rules of probability, we can compute 
\begin{align}\label{app:eq:F_P_a'b'}
\tn{Pr} \left[v_{0,0} \in V_t^C| \neg \tn{abort}_Q \right] =
 \frac{a}{a + pb }.
\end{align}

\vspace{1em}
Now let us combine the statements about probability of aborting and probability of measuring a string in $V_t^C$. Using the ``quantum-to-classical'' reduction, we can formally reformulate the initial statement ``conditioned on not aborting, the state $\rho_{ver(Z)}$ is close to a codeword in the space $V_t$, or the verification phase aborts with high probability'' as
\begin{align}\label{app:eq:Pr_V}
\begin{cases}
\tn{Pr} \left[v_{0,0} \in V_t^C| \neg \tn{abort}_Q \right] > 1-\delta \\
\text{or}\\
\tn{Pr} \left[v_{0,0} \in V_t^C| \neg \tn{abort}_Q \right] \leq 1-\delta \\
\qquad \text{ and } \Pr[\tn{abort}_Q] \geq 1 - \frac{2^{-r}}{\delta}
\end{cases}
\end{align}
where $\delta$ is a threshold for probability of measuring a string from $V_t^C$. 
Indeed, using equations \eqref{app:eq:a'b'} and \eqref{app:eq:F_P_a'b'} we can express $\tn{Pr} \left[v_{0,0} \in V_t^C| \neg \tn{abort}_Q \right]$ as a function of $\tn{Pr}[ \neg \tn{abort}_Q]$,
\begin{align}\label{app:eq:F_P}
\tn{Pr} \left[v_{0,0} \in V_t^C| \neg \tn{abort}_Q \right] = \frac{\tn{Pr}[ \neg \tn{abort}_Q] - p}{\tn{Pr}[ \neg \tn{abort}_Q] (1-p)}
\end{align}
Now, either $\tn{Pr} \left[v_{0,0} \in V_t^C| \neg \tn{abort}_Q \right] > 1-\delta$ and the first condition is satisfied, or $\tn{Pr} \left[v_{0,0} \in V_t^C| \neg \tn{abort}_Q \right] \leq 1-\delta$ and using \eqref{app:eq:F_P} we 
get 
\begin{align}
\Pr[\neg \tn{abort}_Q] \leq \frac{p}{\delta} \leq \frac{2^{-r}}{\delta},
\end{align}
and therefore $\Pr[\tn{abort}_Q] \geq 1 - \frac{2^{-r}}{\delta}$.

\vspace{1em}

In analogy to the above reasoning, one can construct an argument for a check in the $X$ basis. 
Therefore, we can write
\begin{align}\label{app:eq:Pr_W}
\begin{cases}
\tn{Pr} \left[w_{0,0} \in W_t^C| \neg \tn{abort}_Q \right] > 1-\delta' \\
\text{or}\\
\tn{Pr} \left[w_{0,0} \in W_t^C| \neg \tn{abort}_Q \right] \leq 1-\delta' \\
\qquad \text{ and } \Pr[\tn{abort}_Q] \geq 1 - \frac{2^{-r}}{\delta'}
\end{cases}
\end{align}
where $\delta'$ is a threshold for probability of measuring a string from $W_t^C$.

\vspace{1em}

Furthermore, in the protocol we verify that each of the $\ket{\bar{0}}$ ancilla states is sufficiently close to space $V_t$ before running the verification in the $X$ basis. Let $V_t^{0C}$ be a subspace of the code $V_t^C$ whose codewords are entries in the logical $\ket{\bar{0}}$, i.e. $0 + (W^{C*})_t$, where the dual code $(W^{C*})_t \subseteq V^C_t$.  Then $V_t^0$ is a subspace of $V_t$, such that $V_t^0$  is spanned by $\{\ket{v}: v \in V_t^{0C} \}$. Formally, we verify that conditioned on not aborting, the actual state of the ancilla is close to a codeword in $V_t^0$, or the verification phase aborts with high probability,
\begin{align}\label{app:eq:Pr_V0}
\begin{cases}
\tn{Pr} \left[v \in V_t^{0C}| \neg \tn{abort}_Q \right] > 1-\delta'' \\
\text{or}\\
\tn{Pr} \left[v \in V_t^{0C}| \neg \tn{abort}_Q \right] \leq 1-\delta'' \\
\qquad \text{ and } \Pr[\tn{abort}_Q] \geq 1 - \frac{2^{-r}}{\delta''}
\end{cases}
\end{align}
where $\delta''$ is a threshold for probability of measuring a string from $V_t^{0C}$. Since there are $r$ of ancilla checks, the probability that measuring all of the $\ket{\bar{0}}$ states yield a codeword from space $V_t^{0C}$ can be written as 
\begin{align}\label{qpp:eq:prob0}
\Pr[\bigwedge_{\ell=1}^r v_{\ell,0} \in V_t^{0C}\Big|\neg \tn{abort}_Q] \geq 1- r\delta''.
\end{align}
The purpose of having $\ket{\bar{0}} \in V_t^0$ is that using these ancillas for verification in the $X$ basis will not introduce bit flip errors in the $Z$ basis. In other words, any state in $V_t$ remains in $V_t$ after its verification in the $X$ basis, as long as we use ancillas $\ket{\bar{0}} \in V_t^0$.
 
\vspace{1em}

We will now make a statement about the whole verification phase. Let the state of the nodes after the verification in the $Z$ basis have the form
\begin{align} \label{app:eq:rho_ver}
\rho_{ver(Z)|b_Z \neq 0} = \alpha \rho_{V_t} + \beta \rho_{V_t^\perp} 
\end{align}
where $\rho_{V_t}$ is a mixture of pure states in $V_t$ and $\rho_{V_t^\perp}$ is a mixture of pure states in $V_t^\perp$. Here we condition the state on the fact that the public random bits $b_Z$ used in the verification in the standard basis (i.e. $b_{0,m}$ for $m = 1,\dots,r$) are all different than 0, i.e. that at least one CNOT gate is performed. In this case, measuring the state of the nodes after the CNOT, projects it either on $V_t$ or $V_t^{\perp}$. It happens with probabilities $\alpha$ and $\beta$, respectively.

Similarly, after the consecutive verification in the $X$ basis, the state of the nodes will be
\begin{align}\label{app:eq:rho_verZX}
\begin{split}
& \rho_{ver(Z,X)|b_Z, b_X \neq 0, \ket{\bar{0}}\in V_t^0} = \\
&~ = \alpha\alpha' \rho_{V_t \cap \mathcal{F}W_t} + \alpha\beta' \rho_{V_t^\perp \cap \mathcal{F}W_t} \\
&\qquad + \beta \left( \alpha'' \rho_{V_t \cap \mathcal{F}W_t^\perp} + \beta'' \rho_{V_t^\perp \cap \mathcal{F}W_t^\perp} \right) ,
\end{split}
\end{align}
where we additionally condition the state on the fact that bits $b_X$ used for verification in the $X$ basis are all different than zero (i.e. at least one CNOT was performed in the $X$ basis). Moreover, we condition it on the fact that $\ket{\bar{0}}$ ancillas used for verification in the $X$ basis are in $V_t^0$. Assuming the first lines of Equations \eqref{app:eq:Pr_V} and \eqref{app:eq:Pr_W}, we get that
\begin{align}
\alpha \alpha' + \alpha \beta' > 1 - \delta \\
\alpha \alpha' + \beta\alpha'' > 1 - \delta'
\end{align}
The first line implies that $\beta(\alpha'' + \beta'') \leq \delta$ and therefore, $\beta \leq \delta$. Using this in the second line we get that $\alpha \alpha' \geq 1 - \delta - \delta'$. Now, $\alpha \alpha'$ is exactly the probability that measuring $\rho_{ver(Z,X)|b_Z, b_X \neq 0, \ket{\bar{0}}\in V_t^0}$ in the $Z$ basis yields a string in $V_t^C$ \emph{and} measuring it in the $X$ basis yields a string in $W_t^C$.  Therefore, we get,
\begin{align}
\begin{split}
\Pr[v_{0,0}\in V_t^C \wedge w_{0,0} \in W_t^C | \neg \tn{abort}, b_Z, b_X \neq 0, \ket{\bar{0}}\in V_t^0   ] \\ \geq 1 - \delta - \delta'.
\end{split}
\end{align}

Now we will lower-bound the probability $\Pr[v_{0,0}\in V_t^C \wedge w_{0,0} \in W_t^C | \neg \tn{abort}  ] $ i.e.~remove the conditioning on $b_Z, b_X \neq 0, \ket{\bar{0}}\in V_t^0 $ from the above probability expression. Let us evaluate, 
\begin{widetext}
\begin{align}
\begin{split}
&\Pr[v_{0,0}\in V_t^C \wedge w_{0,0} \in W_t^C | \neg \tn{abort}] = \\
&= \Pr [b_Z, b_X \neq 0 \wedge \ket{\bar{0}}\in V_t^0 | \neg \tn{abort}] \Pr[v_{0,0}\in V_t^C \wedge w_{0,0} \in W_t^C | \neg \tn{abort}, b_Z, b_X \neq 0, \ket{\bar{0}}\in V_t^0   ] + \\
& + \underbrace{\Pr [\neg (b_Z, b_X \neq 0) \vee \ket{\bar{0}}\notin V_t^0 | \neg \tn{abort}]}_{\leq r2^{-r} + \Pr[\ket{\bar{0}}\notin V_t^0 | \neg \tn{abort} ] \leq r2^{-r} + r\delta''}
 \underbrace{\Pr[v_{0,0}\in V_t^C \wedge w_{0,0} \in W_t^C | \neg \tn{abort}, \neg (b_Z, b_X \neq 0), \ket{\bar{0}}\notin V_t^0   ] }_{\leq 1},
\end{split}
\end{align}
\end{widetext}
where we assumed the first line of Equation \eqref{app:eq:Pr_V0} to bound $\Pr[\ket{\bar{0}}\notin V_t^0 | \neg \tn{abort} ]$. To sum up, the conjunction of
\begin{align} \label{app:eq:cojunction}
\begin{split}
&\Pr \left[v_{0,0} \in V_t^C| \neg \tn{abort}_Q \right] > 1-\delta \\
&\Pr \left[w_{0,0} \in W_t^C| \neg \tn{abort}_Q \right] > 1-\delta' \\ 
&\Pr[\bigwedge_{\ell=1}^r v_{\ell,0} \in V_t^{0C}\Big|\neg \tn{abort}_Q] \geq 1- r\delta''
\end{split}
\end{align}
implies that
\begin{align} \label{app:eq:Pr_final}
\begin{split}
&\Pr[v_{0,0}\in V_t^C \wedge w_{0,0} \in W_t^C | \neg \tn{abort}] \geq \\
&\quad \geq (1-\delta - \delta') + r(2^{-r}+\delta'')(\delta + \delta').
\end{split}
\end{align}
Therefore, either Equation \eqref{app:eq:Pr_final}  is satisfied or at least one of the equations in \eqref{app:eq:cojunction} not satisfied. In the latter case, Equations \eqref{app:eq:Pr_V}, \eqref{app:eq:Pr_W} and \eqref{app:eq:Pr_V0} imply that 
\begin{align}\label{app:eq:Pabort}
\Pr[\tn{abort}] \geq 1 - \max \left\{ \frac{2^{-r}}{\delta}, \frac{2^{-r}}{\delta'}, \frac{2^{-r}}{\delta''} \right\}
\end{align}
\vspace{-1em}

\end{proof}

\begin{proof}[Proof of Lemma~\ref{lem:compl_vrqss}]
If the dealer is honest, the size of set $B$ must be at most $t$ -- there is at most $t$ malicious nodes and only real malicious nodes are accused of cheating. Therefore, the verification phase will always lead to accepting an honest dealer.

If $R$ is also honest then we must calculate the probability that the verification phase fails to identify the set $B$ of apparent malicious nodes. In this case, the reconstruction phase could take inconsistent shares to reconstruct the original state of the dealer. We can use the ``quantum-to-classical'' reduction argument again (see \cite{Smith2001} and the argument above) and argue about the probability of error for the classical protocol. An error in the classical case can occur when any of the checks for $Z$ or $X$ basis, or checks of $\ket{\bar{0}}$, lead to consistent strings on $V_t^C$, $\mathcal{F}W_t^C$ or $V_t^{C0}$.
Similarly to the argument above, the probability of that occurring is 
\begin{equation}\label{app:eq:completeness}
\epsilon_c = (2+r)2^{-r}
\end{equation}

Let us now look at the reconstruction phase of the quantum protocol to bound the fidelity of the output state. When the reconstructor is honest, she first applies a decoding operator to each branch $i$ corresponding to node $i \notin B$. The operator corrects errors without knowledge of the positions which carry errors (i.e. it corrects arbitrary errors). Therefore, whenever in qubits corresponding to branch $i\notin B$ there is no more than $t$ errors, the decoding will identify the errors and correct them. In the case when there are more than $t$ errors in a branch $i$, the procedure will leave that branch untouched and the reconstructor will update the set $B$ with position $i$.
Secondly, the honest reconstructor applies an erasure-recovery circuit to randomly chosen $n-2t$ positions from $i \notin B$. In the case when all of the errors are correctly identified in $B$, the erasure-recovery corrects for $n-2t$ erasure errors, i.e. missing qubits of the dealer and malicious nodes, and outputs the original state of the dealer. 
Since the verification phase can fail to identify the set $B$ with probability $\epsilon_c$, we have:
\begin{align}\label{eq:reconstruction}
\begin{split}
\rho_{rec}
 =~ & (1-\epsilon_c) \ketbra{\psi}+ \epsilon_c \tilde{\rho}_R,
\end{split}
\end{align}
where $\tilde{\rho}_R$ is an arbitrary state that depends on the action of the malicious nodes.
Let us define the fidelity of the reconstructed state as $F = \Tr \left[ \rho_{rec} \ketbra{\psi}_R \right]$. Using linearity properties of the trace together with the fact that quantum states have non-zero trace, we have that 
\begin{align}
\begin{split}\label{app:eq:fidelity}
F = & \Tr \left[ ((1-\epsilon_c) \ketbra{\psi} +  \epsilon_c \tilde{\rho})~ \ketbra{\psi} \right] \\
= & (1-\epsilon_c)\Tr \left[  \ketbra{\psi} \ketbra{\psi} \right] + \underbrace{\epsilon_c \Tr\left[ \tilde{\rho}~ \ketbra{\psi} \right]}_{\geq 0} \\
\geq & 1 - \epsilon_c.
\end{split}
\end{align}
 
\end{proof}

\end{document}